\crefname{figure}{\textbf{Fig.}}{\textbf{Figs.}}  
\Crefname{figure}{\textbf{Fig.}}{\textbf{Figs.}} 
\newtheorem{definition}{Definition}[section]
\newtheorem{remark}{Remark}[section]
\newtheorem{lemma}{Lemma}[section]
\newtheorem{thm}{Theorem}[section]
\newtheorem{prop}[thm]{Proposition}
\theoremstyle{definition}
\newtheorem{defn}{Definition}[section]
\numberwithin{equation}{section}
\DeclareMathSymbol{\C}{\mathalpha}{AMSb}{"43}
\newcommand{\lam}{\lambda}
\newcommand{\bsub}{\begin{subequations}}
\newcommand{\esub}{\end{subequations}$\!$}
\begin{document}
\title{Free energy dissipation and a decomposition of general jump diffusions on $\mathbb{R}^n$ without detailed balance}

\author[1,2]{Shuyuan Fan}
\author[3]{Qi Zhang}
\affil[1]{School of Mathematics, Sun Yat-sen University,  Guangzhou, Guangdong 510275, China}
\affil[2]{School of Science, Great Bay University, Dongguan, Guangdong, 523000, China}
\affil[3]{Beijing institute of mathematical sciences and applications, Beijing, 101408 China, Email: \texttt{qzhang@bimsa.cn}}

\smallbreak \maketitle

\begin{abstract}

We analyze the thermodynamic structure of jump diffusions combining Brownian and Poisson noise, a class of stochastic dynamics relevant to non-equilibrium statistical physics.
For such nonlocal dynamics, the free energy admits a full dissipation formula that decomposes into entropy production and housekeeping heat.
A central result is a decomposition of the generator into symmetric and anti-symmetric parts with respect to the invariant measure $\rho_\mathrm{ss}$.
The symmetric sector corresponds to a reversible dynamics and yields a nonlocal Fisher information governing free-energy decay, whereas the anti-symmetric sector generates a canonical conservative flow that produces circulation but no dissipation. Several numerical examples motivated by intracellular particle transports demonstrate how this decomposition clarifies the structure of non-equilibrium stationary states in jump-driven systems.

\end{abstract}

\vskip 0.2truein

\noindent {\it Keywords:} Free energy dissipation; nonlocal equations; L\'{e}vy processes; Fisher information.

\vskip 0.2truein

\section{Introduction}

The concept of free energy lies at the heart of statistical physics, serving as the fundamental bridge between microscopic stochastic fluctuations and macroscopic thermodynamic laws. 
For a stochastic system coupled to a thermal environment, 
the generalized free energy—mathematically formulated as the relative entropy with respect to a stationary measure—transcends its classical equilibrium definition to become a universal metric quantifying the ``distance" from the steady state. 
Its temporal evolution encapsulates the essence of the Second Law: 
the free energy dissipation reflects the monotonic loss of distinguishability as the system relaxes, acting as a robust Lyapunov functional that guarantees dynamical stability \cite{NP77,DM62,JQQ04}. 
Crucially, this dissipation is not merely a mathematical decay; 
it physically encodes the time-irreversibility of the underlying trajectory and provides a rigorous quantification of the entropy production rate, thereby revealing the thermodynamic cost and the arrow of time embedded in the stochastic dynamics \cite{stochasticThermo1,stochasticThermo2,GQ10,QQT02,VE10,EV10}.

Classically, for diffusion processes governed by the second-order Fokker--Planck equation, 
the relationship between free energy and entropy is well-understood: in detailed-balanced systems, the free energy dissipation rate is exactly balanced by the total entropy production rate \cite{QQT02,VE10, DP23}. 
However, for systems driven into a non-equilibrium steady state by non-conservative forces, this equality breaks down. The discrepancy led to the concept of ``housekeeping heat'' $Q_\mathrm{hk}$, first introduced phenomenologically by Oono and Paniconi \cite{OP98} to describe the heat dissipated solely to maintain the steady state. 
This concept was later formalized by Hatano and Sasa \cite{HS01} through a generalized fluctuation-dissipation relation. 
More recently, a unified framework was established by Esposito and Van den Broeck \cite{esposito2010,EV10,VE10}, who decomposed the total entropy production into an adiabatic part (associated with housekeeping heat) and a non-adiabatic part (associated with system relaxation).
In this framework, the free energy dissipation is linked strictly to the non-adiabatic entropy production, 
while the housekeeping heat represents the energetic cost of breaking detailed balance. 

Underpinning these thermodynamic phenomenologies is the geometric structure of the generator and associated Fokker-Planck operator.
For continuous diffusion processes, it is known that the drift vector field $b(x)$ admits a Helmholtz-Hodge decomposition: 
$$b(x) = a(x)a^\mathrm{T}(x)\nabla V(x) + v_\mathrm{res}(x),$$
where $V$ is a quasi-potential function linked to the stationary density $\rho_\mathrm{ss}$, the matrix $a(\cdot)$ is the diffusion coefficient, and $v_\mathrm{res}$ is a divergence-free residual drift \cite{A04,Q13, DP23}.
The vector field $b(x)$ decomposition corresponds to an operator-theoretic splitting of the generator $\mathcal{L}$ into a symmetric part $\mathcal{L}_\mathrm{s}$ and an antisymmetric part $\mathcal{L}_\mathrm{a}$ in the weighted Hilbert space $L^2(\rho_\mathrm{ss})$ \cite{Q13,DP23}. 
The symmetric part generates a reversible dynamics responsible for the relaxation towards equilibrium, while the antisymmetric part generates a measure-preserving conservative flow along the level sets of the stationary distribution. 

While the above framework is well-established for continuous diffusions, many natural systems evolve through abrupt events rather than continuous fluctuations or discrete setting. 
In particular, jump noise has emerged as a ubiquitous modeling ingredient across scales. 
In biological physics, jump diffusions and L\'evy-type models have been used to describe intermittent search and foraging patterns \cite{Viswanathan99,Bartumeus05}, bursty intracellular transport and active fluctuations in crowded cytoplasm \cite{LVS2015}, and heavy-tailed activity statistics in neuronal populations \cite{Beggs03}. 
In complex physical environments, jump components capture anomalous transport in turbulence \cite{Shlesinger87} and rapid transitions (tipping events) in climate records \cite{Ditlevsen10}. 
In quantitative finance, jump diffusion models extend Black--Scholes to reproduce skewness and fat tails in returns \cite{Merton76,DK08}. 
More recently, time-reversal and nonlocal diffusion mechanisms associated with jump processes have found new roles in machine learning, for instance in score-based generative modeling where non-Gaussian noise can improve exploration and sampling efficiency \cite{YP23,GeneratorMatching}.

A concrete biophysical motivation comes from intracellular active transport \cite{briane2020,appert2015}. 
Single-particle tracking experiments reveal strongly jump trajectories in intracellular active transport: cargos undergo long periods of near-diffusive motion and transient trapping, punctuated by sporadic fast relocations when they engage cytoskeletal tracks and molecular motors. 
This ``pause--run--pause'' dynamics naturally produces non-Gaussian displacement statistics and heavy-tailed increments, for which compound-Poisson jump diffusions provide an effective description. 
Thermodynamically, motor activity and cellular maintenance are powered by chemical free energy (ATP/GTP hydrolysis), so even when the spatial distribution of cargos appears stationary, the system can sustain persistent probability currents and a nonzero steady entropy production rate \cite{vale1985,goychuk2014,reimann2002,goychuk2014plos}. 
This highlights a key distinction that is essential in applications: free-energy dissipation quantifies relaxation toward the stationary law after perturbations, whereas housekeeping heat (and the steady component of entropy production) quantifies the ongoing energetic throughput required to maintain a non-equilibrium steady state. 
In this sense, changes in active-transport statistics can be interpreted as changes in the non-equilibrium maintenance cost, providing a thermodynamic lens on shifts between quiescent-like and activated-like cellular regimes.

Despite the broad phenomenological relevance of jump processes, the non-equilibrium statistical mechanics of continuous-state jump diffusions is far less developed than that of either continuous Langevin dynamics or discrete-state Markov jump processes (Master equations) \cite{bremaud2020,MN03,MNW08,Schnakenberg76}. 
Recent theoretical efforts have advanced the stochastic thermodynamics of L\'{e}vy flights, exploring entropy production and fluctuation theorems within frameworks featuring non-local operators. These efforts range from modeling anomalous diffusion \cite{KRS2008} and thermodynamic formulations for master equation systems \cite{TD12}, to machine-learning-based simulations of active matter \cite{BV24} and the analysis of fluctuation theorems under L\'{e}vy noise \cite{HLMZ25}. Notably, these theoretical developments are increasingly supported by experimental evidence suggesting that L\'{e}vy-type motions may underlie intracellular active transport processes \cite{chen2015}.
However, extending the geometric and operator-theoretic decompositions to this setting poses significant analytical challenges.
Unlike the local Laplacian operator in continuous diffusions, the generator of a jump process is non-local, involving integro-differential operators.
Consequently, the standard definition of "probability current" becomes non-local, making the characterization of the "residual drift" and the associated conservative dynamics highly non-trivial, and the Helmholtz-Hodge decomposition that underpin dissipation theory in the diffusion case do not transfer directly for jump diffusions. As a result, explicit and computable formulas for free-energy dissipation, entropy production, and housekeeping heat---together with a clear structural interpretation in terms of symmetric/antisymmetric components of the generator---remain less explored for general nonlocal jump diffusions. 
Bridging this gap is important both conceptually to understand irreversibility beyond the diffusive setting, and practically to quantify non-equilibrium costs in jump-driven models used in biophysics, climate, finance, and modern generative modeling.

In this paper, we are interested in the jump diffusion $(X_t)_{t\geq 0}$ which is given by the following SDE on $\mathbb{R}^n$:
\begin{equation}\label{LeqX}
    dX_t =  b(X_{t})dt +  \sqrt{2\beta^{-1}} a(X_{t^-}) \circ dB_t  +  \int_{\mathbb{R}^n\setminus\{0\}} \sigma(X_{t^{-}},z) N(dt,dz),
\end{equation}
where $B_t$ is an $n$-dimensional Brownian motion on the probability space $(\Omega, \mathscr{F}, \mathbb{P})$ equipped with a filtration $(\mathscr{F}_t)_{t\geq 0}$, $\circ$ denotes the Stratonovich integral, $b(\cdot)$ is a drift from $\mathbb{R}^n$ to $\mathbb{R}^n$, $a(\cdot)=(a_{ij}(\cdot))_{n\times n}:\mathbb{R}^n \rightarrow  \mathbb{R}^{n \times n}$ is the diffusion coefficient, $N(dt,dz)$ be an independent $n$-dimensional Poisson random measure with Poisson intensity rate $\lambda>0$ and L\'{e}vy measure $\nu(dz)$ on $\mathbb{R}^n \setminus\{0\}$, the function $\sigma:\mathbb{R}^n \times \mathbb{R}^n \rightarrow  \mathbb{R}^{n}$ is the coefficient of jump, and the $\beta:=\theta^{-1}$ denote the inverse temperature.
We remark that in this paper, we only consider the case that isothermal dynamics coupled to a single heat bath, thus $\beta$ is a fixed constant.
Throughout this paper, the jump diffusion $(X_t)_{t\ge0}$ denotes a general continuous-state Markov stochastic dynamics with possible jumps. We do not restrict $X_t$ to be a mechanical position variable. In applications, $X_t$ may represent an overdamped Langevin dynamics driven by thermal noise plus jump noise, but it may also encode other effective observables in physics, biology, or finance, such as a reaction coordinate, a concentration/abundance variable, a membrane-potential proxy, or a log-price/volatility factor. 
We present some preliminary results for jump-diffusions $(X_t)_{t\geq 0}$, and list some necessary assumptions for the SDE \eqref{LeqX} in Section \ref{Sec2}.

We assume that the process admits an invariant probability measure $\mu_\mathrm{ss}(dx)=\rho_\mathrm{ss}(x)\,dx$, and interpret $\rho_\mathrm{ss}$ as a steady state (either an equilibrium state or a non-equilibrium steady state) of the dynamics. 
Following the standard viewpoint in stochastic thermodynamics, we define the generalized free energy of $X_t$ relative to the stationary measure $\mu_\mathrm{ss}$ by the relative entropy
\[
F(\rho(t)) 
= \beta^{-1}\mathcal{H}(\mu_t\mid\mu_\mathrm{ss})
= \beta^{-1}\int_{\mathbb{R}^n} \log\left(\frac{\rho(t,x)}{\rho_\mathrm{ss}(x)}\right)\rho(t,x)\,dx,
\]
where $\rho(t,x)$ is the density function of $X_t$. In Section~\ref{Sec3}, combining this definition with the nonlocal Fokker–Planck equation, we derive an explicit formula for the free energy dissipation rate $dF/dt$ in terms of the local and nonlocal probability currents $j^{\rm loc}, j^{\rm nl}$. 
More precisely, we obtain a decomposition of the form
\[
\frac{dF(\rho(t))}{dt}
= Q_\mathrm{hk}(t) - \beta^{-1} e_\mathrm{p}(t),
\]
where $e_\mathrm{p}(t)\ge0$ is the entropy production rate, and $Q_\mathrm{hk}(t)\ge0$ is the housekeeping heat associated with the maintenance of a non-equilibrium steady state. 
The entropy production rate $e_\mathrm{p}(t)$ contains both a local diffusive contribution and a nonlocal jump contribution, each expressed in a fluctuation–dissipation form involving suitable logarithmic ratios of forward and backward transition rates. 
We also prove that $F(\rho(t))$ is nonincreasing in time and establish a Clausius-type inequality
\[
\theta\frac{dS(t)}{dt} - \frac{dQ(t)}{dt} = \theta e_\mathrm{p}(t)\ge 0,
\]
where $S(t)$ is the Gibbs entropy and $Q(t)$ is the heat flux exchanged with the environment. 
These results extend to general jump diffusions the thermodynamic structure previously known for diffusions and Markov jump processes \cite{EV10,VE10,MN03,MNW08,ZL25}.

A second main objective of this work is to clarify the structural origin of free energy dissipation in terms of the generator of the jump diffusion. 
Motivated by the diffusion case \cite{Q13,DP23,ZL25}, we introduce the weighted Hilbert space $L^2(\mathbb{R}^n,\rho_\mathrm{ss}(x)\,dx)$ and study the adjoint structure of the generator $\mathcal{L}$ with respect to the inner product
\[
\langle \varphi,\psi\rangle_\mathrm{ss}
= \int_{\mathbb{R}^n}\varphi(x)\psi(x)\rho_\mathrm{ss}(x)\,dx.
\]
We show in Section~4 that $\mathcal{L}$ admits a canonical decomposition into a symmetric and an anti-symmetric part in this weighted $L^2$–space:
\[
\mathcal{L} = \mathcal{L}_\mathrm{s} + \mathcal{L}_\mathrm{a},
\]
where $\mathcal{L}_\mathrm{s}$ is self–adjoint with respect to $\mu_\mathrm{ss}$, while $\mathcal{L}_\mathrm{a}$ is anti-symmetric with respect to $\mu_\mathrm{ss}$. 

Here, the anti-symmetric operator $\mathcal{L}_\mathrm{a}$ encapsulates the "rotational" component of the stochastic dynamics. 
In the absence of detailed balance, the stationary condition $\mathcal{L}^*\rho_\mathrm{ss}=0$ is satisfied not locally by vanishing currents, but globally by the formation of closed current loops.
The operator $\mathcal{L}_\mathrm{a}$ generates this divergence-free (solenoidal) flow of probability, which we term the canonical conservative dynamics, so that the  free energy dissipation due to $\mathcal{L}_\mathrm{a}$ vanishes identically:
\[
F'_\mathrm{a}(\rho(t)) = 0.
\]
In contrast, the symmetric part $\mathcal{L}_\mathrm{s}$ has a natural Dirichlet form associated with it and defines a nonlocal Fisher information functional $\mathcal{I}(\rho(t))$, so that the free energy dissipation due to $\mathcal{L}_\mathrm{s}$ is given by
\[
\frac{dF(\rho(t))}{dt}
= F'_\mathrm{s}(\rho(t))
= -\beta^{-1}\mathcal{I}(\rho(t)),
\]
so that the housekeeping heat is exactly the gap between the total entropy production and the symmetric Fisher-information dissipation. 
This provides a new interpretation of non-equilibrium housekeeping heat in terms of a difference between two equilibrium-like reversible dynamics, both formally governed by symmetric generators.

The relationship between entropy production, time–reversibility, and detailed balance for general jump diffusions has been recently clarified in \cite{ZL25}. 
Under suitable assumptions, detailed balance, reversibility with respect to $\mu_\mathrm{ss}$, a certain gradient structure of the drift and jump kernel, and the vanishing of the stationary entropy production rate are all equivalent (see Proposition~\ref{TSeq}). 
Our analysis refines this picture by identifying, in the absence of detailed balance, a canonical decomposition of the dynamics into a non-dissipative conservative part and a purely dissipative symmetric part, and by linking the free energy dissipation to a nonlocal Fisher information associated with the symmetric generator.

To illustrate our results, we apply the free energy dissipation and the decomposition structure of jump-diffusions to analyze the intracellular active transports in Section~5. 
The first example in Section 5 is a L\'evy-driven jump-diffusion motivated by intermittent intracellular active transport, exhibiting a non-equilibrium steady state where the entropy production rate and housekeeping heat have nontrivial steady values and the free energy decays to zero while the system remains out of equilibrium. The second example is a intracellular active transport with an equilibrium steady state (Gibbs measure), for which the housekeeping heat vanishes, and the free energy dissipation is entirely given by the entropy production rate. 
These examples demonstrate how our decomposition clarifies the respective roles of symmetric and anti-symmetric dynamics in the thermodynamic behavior of jump diffusions.

In summary, our contributions are threefold: 
(i) we derive an explicit free-energy dissipation identity for general jump diffusions in terms of local and nonlocal probability currents, yielding a transparent decomposition into entropy production and housekeeping heat; 
(ii) we establish a canonical weighted-$L^2(\rho_{\mathrm{ss}})$ symmetric/antisymmetric decomposition of the nonlocal generator and identify the symmetric dissipation with a nonlocal Fisher-information/Dirichlet-form structure; 
and (iii) we illustrate how these objects behave in representative examples, including a prototypical non-equilibrium L\'evy-driven model that captures heavy-tailed, bursty transport relevant to biophysical applications.

The rest of the paper is organized as follows. 
In Section \ref{Sec2}, we recall basic facts on jump diffusions and nonlocal Fokker–Planck equations, and we collect the assumptions ensuring existence of a unique strong solution, a smooth transition density, and an invariant measure. 
In Section~\ref{Sec3}, we introduce the generalized free energy, derive its dissipation formula, and obtain the decomposition into entropy production and housekeeping heat together with the Clausius inequality. 
Section~4 is devoted to the $L^2(\rho_\mathrm{ss})$–decomposition of the generator into symmetric and anti-symmetric parts, the notion of canonical conservative dynamics, and the characterization of free energy dissipation in terms of the symmetric Dirichlet form and Fisher information. In Section~5, we present two explicit one-dimensional examples  from intracellular active transports to illustrate the theoretical results.

\section{Preliminary}\label{Sec2}

In this section, we introduce our assumptions, and recall some basic definitions and useful results for the jump-diffusions $(X_t)_{t \geq 0}$. 
These materials can be found in some classical references for jump-diffusions or L\'{e}vy processes, including \cite{A2004, S99, K19} and the references therein.

Let $(\Omega, \mathscr{F}, (\mathscr{F}_t)_{t\geq 0}, \mathbb{P})$ be a filtered probability space that satisfies the usual hypotheses.
Let $B_t$ be an $n$-dimensional Brownian motion, and $L_{t}$ be an independent $n$-dimensional compound Poisson process with Poisson intensity rate $\lambda>0$ and the L\'{e}vy measure (probability of jump size) $\nu(dz)$ on $\mathbb{R}^n \setminus\{0\}$. 
Then the compound Poisson process is given by $L_t = \sum_{i=1}^{n_t}D_i$, where $n_t$ is the counting variable of a Poisson process with rate $\lambda$, and $\{D_i\}_{i\geq 1}$ are i.i.d. random variables with L\'{e}vy measure $\nu(dz)$, 
which are independent of $n_t$. Then we can introduce the Poisson random measure $N(dt,dy)$ associated with $L_t$ as
\begin{equation*}
    N(t,B)(\omega):= \sharp \{s\in [0,t]:\Delta L_{s}(\omega) \in B\}, \quad t\geq 0, B\in \mathscr{B}(\mathbb{R}^{n}\backslash \{0\}),
\end{equation*}
where $B \in \mathbb{R}^{n}\backslash \{0\}$ is the Borel measurable set and $\mathscr{B}(\mathbb{R}^{n}\backslash \{0\})$ is the Borel-$\sigma$-algebra on $\mathbb{R}^{n}\backslash \{0\}$, so that $\mathbb{E}N(dt,dz) = \lambda dt\nu(dz) = \lambda m(z)dtdz$. Using the Poisson random measure $N(dt,dy)$, the compound Poisson process $L_t$ can be rewritten as
\begin{equation*}
    L_{t} = \int_{0}^t\int_{\mathbb{R}^n \setminus \{0\}} z N(dt,dz).
\end{equation*}
We assume that the associated L\'{e}vy measure $\nu(dz)=m(z)dz$ satisfies the following integrability condition
\begin{equation*}
    \int_{\mathbb{R}^n \setminus\{0\}} 1\wedge |z|^2 \nu(dz) = \int_{\mathbb{R}^n \setminus\{0\}} 1\wedge |z|^2 m(z)dz < \infty.
\end{equation*}
The above integrability condition implies that the sample paths of $L_t$ have a finite quadratic variation with finitely many large jumps with an amplitude larger than $1$, almost surely.

Using the compensated Poisson random measure $\widetilde{N}(dt,dy)=N(dt,dy)- \lambda dt\nu(dy)$ and the properties of Stratonovich integral \cite{e2019}, the SDE \eqref{LeqX} can be rewritten as
\begin{align}
 dX_t = & \left[ b(X_{t}) + \beta^{-1}\nabla \cdot A(X_t) + \lambda\int_{0< |z| \leq 1} \sigma(X_{t^-},z) \nu(dz) \right]dt + \sqrt{2\beta^{-1}} a(X_{t^-})dB_t \nonumber \\
 & \quad + \int_{0<|z| \leq 1} \sigma(X_{t^{-}},z)\widetilde{N}(dt,dz) + \int_{|z|>1} \sigma(X_{t^{-}},z) N(dt,dz),
\end{align}
where $A(x) = a(x)a^\mathrm{T}(x):= (A_{ij}(x))$ and $(\nabla\cdot A(x))_i:=\sum_{jk}a_{kj}\partial_ka_{ij}$.

Now we make some assumptions for the SDE \eqref{LeqX}. All these assumptions in this section are assumed to hold in the sequel unless otherwise specified.
\begin{enumerate}
    \item[({\bf A})](Lipschitz condition) For all $k \in \mathbb{N}$ and $ x_{1}, x_{2} \in  \mathbb{R}^n$ with $|x_1| \vee |x_2| \leq k$, there exists a constant  $c_1 >0$ such that
\begin{equation*}
     |b(x_1) - b(x_2)|^2+\|A(x_1)- A(x_2)\| + \int_{0 <|z| <1} |\sigma(x_1,z) - \sigma(x_2,z)|^2 \nu (dz) \leq c_1 |x_1-x_2|^{2}.
\end{equation*}
Here and below, $\|\cdot\|$ denotes the Hilbert-Schmidt norm of a matrix, and $|\cdot|$ denotes the Euclidean norm.
    \item[({\bf B})] (Growth condition) For all $k \in \mathbb{N}$ and $ x\in  \mathbb{R}^n$, there exists a constant  $c_2 >0$ such that
\begin{equation*}
 |b(x)|^2 + \|A(x)\| + \int_{0 <|z| <1} |\sigma(x,z)|^2 \nu (dz) < c_2(1 + |x|^2).
\end{equation*}
    \item[({\bf C})] (Boundedness condition) the Jacobian matrix $\nabla_z \sigma(x,z)$ is non-degenerate for all $x,z \in \mathbb{R}^n$, and there exists a constant $c_2>0$ such that $\|\nabla_z \sigma(x,z)\| < c_2$ for all $x,z \in \mathbb{R}^n$.
\end{enumerate} 

Under above assumption, we have the following well-posedness result of the SDE \eqref{LeqX} (see e.g. \cite[Theorem 6.2.9]{A2004} or \cite[Theorem 3.3.1]{K19}).

\begin{thm}\label{ExD}
Suppose that $({\bf A})$ and $({\bf B})$ hold, and impose the standard initial condition. Then the SDE \eqref{LeqX} admits a unique global strong solution $(X_t)_{t\geq 0}$.
\end{thm}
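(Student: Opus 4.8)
The statement is the classical existence-and-uniqueness theorem for It\^o--L\'evy SDEs under local Lipschitz plus linear growth conditions, and the plan is to follow the standard Picard-iteration-plus-localization scheme, exactly as in the cited references \cite{A2004, K19}. First I would pass to the equivalent It\^o form displayed just above, so that the dynamics are driven by the effective drift $\tilde b(x) = b(x) + \beta^{-1}\nabla\cdot A(x) + \lambda\int_{0<|z|\le 1}\sigma(x,z)\,\nu(dz)$, the Brownian coefficient $\sqrt{2\beta^{-1}}\,a(x)$, a compensated small-jump integral against $\widetilde N$, and an uncompensated large-jump integral against $N$ on $\{|z|>1\}$. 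One checks that assumptions $({\bf A})$ and $({\bf B})$ transfer to these effective coefficients: the small-jump drift $\lambda\int_{0<|z|\le1}\sigma(x,z)\,\nu(dz)$ inherits local Lipschitz continuity and linear growth from the integral conditions in $({\bf A})$, $({\bf B})$ (via Cauchy--Schwarz against the finite measure $\mathbf 1_{\{|z|\le 1\}}\nu$), the Stratonovich correction $\nabla\cdot A$ is assumed smooth enough to be locally Lipschitz with linear growth, and the large-jump part has finite activity because $\nu$ restricted to $\{|z|>1\}$ is finite.

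Next I would establish a \emph{local} solution by truncation. For each $k\in\mathbb N$, fix a smooth cutoff supported in $B_{2k}(0)$ and equal to $1$ on $B_k(0)$, and define truncated coefficients $\tilde b^{(k)}, a^{(k)}, \sigma^{(k)}$ which agree with the originals on $B_k(0)$ and are globally Lipschitz with linear growth. For the truncated SDE, run the Picard iteration $X^{0}\equiv X_0$ and $X^{m+1}_t = X_0 + \int_0^t \tilde b^{(k)}(X^m_s)\,ds + \sqrt{2\beta^{-1}}\int_0^t a^{(k)}(X^m_{s^-})\,dB_s + \int_0^t\!\!\int_{0<|z|\le1}\sigma^{(k)}(X^m_{s^-},z)\,\widetilde N(ds,dz) + \int_0^t\!\!\int_{|z|>1}\sigma^{(k)}(X^m_{s^-},z)\,N(ds,dz)$. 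Using Doob's $L^2$ maximal inequality, It\^o's isometry for the Brownian integral, the $L^2$ isometry $\mathbb E\big|\int_0^t\!\int \sigma\,\widetilde N\big|^2 = \lambda\,\mathbb E\int_0^t\!\int|\sigma|^2\,\nu(dz)\,ds$ for the compensated part, and a pathwise estimate for the finitely many large jumps, one gets $\mathbb E\sup_{s\le T}|X^{m+1}_s - X^m_s|^2 \le C_{T,k}\int_0^T \mathbb E\sup_{r\le s}|X^m_r - X^{m-1}_r|^2\,ds$; iterating yields a summable series, so $X^m$ converges in $L^2$ uniformly on $[0,T]$ to the unique strong solution $X^{(k)}$ of the truncated equation, uniqueness at level $k$ following from the same Gronwall estimate.

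Then I would patch the local solutions. Setting $\tau_k = \inf\{t\ge 0: |X^{(k)}_t|\ge k\}$, local uniqueness gives $X^{(k)} = X^{(k+1)}$ on $[0,\tau_k]$, so $(\tau_k)$ is nondecreasing and $X_t := X^{(k)}_t$ for $t<\tau_k$ is a well-defined strong solution on the random interval $[0,\tau_\infty)$ with $\tau_\infty = \sup_k\tau_k$. Finally, to rule out explosion, apply It\^o's formula to $|X_{t\wedge\tau_k}|^2$, use the linear growth condition $({\bf B})$ to dominate all drift, quadratic-variation, and jump contributions by $C(1+|X_s|^2)$, take expectations (the stopped stochastic integrals are martingales), and apply Gronwall to obtain $\mathbb E|X_{t\wedge\tau_k}|^2 \le (1+\mathbb E|X_0|^2)e^{Ct}$ uniformly in $k$; hence $k^2\,\mathbb P(\tau_k\le t) \le \mathbb E|X_{t\wedge\tau_k}|^2 \le C_t$, so $\mathbb P(\tau_k\le t)\to 0$ and $\tau_\infty = \infty$ a.s. Global uniqueness follows because any two global solutions must coincide with $X^{(k)}$ on $[0,\tau_k]$ for every $k$.

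The main obstacle is the careful handling of the jump terms: obtaining the correct $L^2$ (and, if one wants $L^p$ moment bounds, the Kunita-type) estimates for the compensated Poisson integral, cleanly separating the compensated small jumps — controlled through $\int_{0<|z|\le1}|\sigma|^2\,\nu(dz)$ — from the finite-activity large jumps handled between successive jump times, and verifying that the Stratonovich-to-It\^o correction $\nabla\cdot A$ is genuinely locally Lipschitz with linear growth (which requires $A\in C^1$, implicit in the standing regularity) so that $({\bf A})$ and $({\bf B})$ really do apply to the effective It\^o coefficients.
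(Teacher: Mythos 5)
Your proof is correct and follows exactly the standard Picard-iteration--localization--nonexplosion argument of the references the paper itself cites (Applebaum, Thm.\ 6.2.9; Kunita, Thm.\ 3.3.1); the paper gives no independent proof but simply invokes these results. Your caveat about the regularity of the Stratonovich correction $\nabla\cdot A$ is a fair observation, but it is implicit in the paper's standing assumptions and does not change the argument.
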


The solution $(X_t)_{t\geq 0}$ has the Markov semigroup $(P_t)_{t\geq 0}$, which is given by
\begin{equation*}
    P_t f(x)=\mathbb{E}(f(X_t)|X_0 =x), \quad \forall f \in L^1(\mathbb{R}^n, \mu).
\end{equation*}
We also denote the generator of the Markov process $(X_t)_{t\geq 0}$ by $\mathcal{L}$, i.e.
\begin{equation*}
    \mathcal{L}f = \lim_{t \rightarrow 0} \frac{1}{t}(P_t f - f),\quad \forall f \in L^1(\mathbb{R}^n, \mu).
\end{equation*}
Then the jump diffusion $X_t$ has generator
\begin{align*}
    \mathcal{L}f(x) = & b^\mathrm{T}(x) \nabla f(x) + \beta^{-1} \nabla\cdot (A(x)\nabla f(x)) + \lambda\int_{\mathbb{R}^n \setminus \{ 0 \} } \left[ f(x+ \sigma(x,z)) - f(x)\right] d\nu(z).
\end{align*}

Under the above boundedness condition, for every $x \in \mathbb{R}^n$, the map $z \mapsto \sigma(x,z)$ is a $C^1$-diffeomorphism, and admits an inverse denoted by $\sigma^{-1}(x,z)$. By the change of variables, the generator can also be rewritten as
\begin{align*}
    \mathcal{L}f(x) = & b^\mathrm{T}(x) \nabla f(x) + \beta^{-1} \nabla\cdot (A(x)\nabla f(x))  \\
      & + \lambda \int_{\mathbb{R}^{n}\setminus \{0\}} [f(x+z)- f(x)] \det(\nabla_{z}\sigma^{-1}(x,z))m(\sigma^{-1}(x,z))dz \\
    = & b^\mathrm{T}(x) \nabla f(x) + \beta^{-1} \nabla\cdot (A(x)\nabla f(x)) +   \int_{\mathbb{R}^{n}\setminus \{x\}} [f(y)-f(x)] k(x,y)dy,
    \end{align*}
where $k(x,y) = \lambda m(\sigma^{-1}(x,y- x))\det(\nabla_{y-x}\sigma^{-1}(x,y-x))$ is the jump kernel of $X_t$. Note that $k(x,y)dtdy$ is a measure on $\mathbb{R}^{+}\times(\mathbb{R}^n\setminus\{x\})$ associated with the pure jump part $\Delta X_t = X_t - X_{t^{-}}$ of $(X_t)_{t\geq0}$. Now we introduce the Poisson random measure 
\begin{align*}
    N_X(t,B)(\omega):= \sharp \{s\in [0,t]:\Delta X_{s}(\omega) \in B\}, \quad t\geq 0, B\in \mathscr{B}(\mathbb{R}^{n}\backslash \{0\}),
\end{align*}
so that $N_X(dt,dz):= \det(\nabla_z \sigma^{-1}(X_{t^{-}},z))N(dt, d\sigma^{-1}(X_{t^{-}},z))$, where $B \in \mathbb{R}^{n}\backslash \{0\}$ is the Borel measurable set and $\mathscr{B}(\mathbb{R}^{n}\backslash \{0\})$ is the Borel-$\sigma$-algebra on $\mathbb{R}^{n}\backslash \{0\}$. Then the associated compensated Poisson random measure 
\begin{align*}
    \widetilde{N}_X(dt,dz) = & N_X(dt,dz) - \lam\det(\nabla_z \sigma^{-1}(X_{t^{-}}, z))m(\sigma^{-1}(X_{t^{-}},z))dzdt \\
    = & N_X(dt,dz) -k(X_{t^{-}},X_{t^{-}} +z)dzdt
\end{align*}
generates a local martingale.
Using the Poisson random measure $N_X(dt,dz)$, we can rewrite
\begin{equation*}
    \int_{|z|>0} \sigma(X_{t^-},z) N(dt,dz) \overset{\text d}{=} \int_{|z|>0} z N_X(dt,dz),
\end{equation*}
and the SDE \eqref{LeqX} is also equivalent in distribution sense with
\begin{equation*}
     dX_t =  b(X_{t})dt +  \sqrt{2\beta^{-1}} a(X_{t^-}) \circ dB_t  + \int_{|z|>0} z N_X(dt,dz).
\end{equation*}

The existence of a smooth transition density of Markov processes with jumps has been investigated by Malliavin calculus, see e.g. Picard \cite{P96}, and Kunita \cite[Chapter 6]{K19}.
Moreover, the probability density function $\rho(x,t)$ satisfies the nonlocal Fokker-Planck equation
\begin{equation}\label{nonlocalFk}
    \partial_t \rho(x,t) = \mathcal{L}^{\ast}\rho(x,t),
\end{equation}
where $\mathcal{L}^{\ast}$ is the adjoint operator of the generator $\mathcal{L}$, which is given by
\begin{equation*}
    (\mathcal{L}^{\ast}\rho)(x) = \nabla\cdot [-b(x)\rho(x) + \beta^{-1} A(x)\nabla \rho(x)] +\int_{\mathbb{R}^n \setminus \{ x \} }k(y,x)\rho(y) - k(x,y)\rho(x)dy.
\end{equation*}

Similar with the local Fokker-Planck equation, we define the local probability current
\begin{equation}
    j^\mathrm{loc}(t,x) := b(x)\rho(t,x) - \beta^{-1} A(x)\cdot\nabla \rho(t,x).
\end{equation}
Since there is a nonlocal jump part in the operator $\mathcal{L}^{\ast}$, for every $x \neq y$, we also define the nonlocal probability current between $x$ and $y$ as
\begin{equation}
    j^\mathrm{nl}(t,x,y) = \rho(t,x)k(x,y) - \rho(t,y)k(y,x).
\end{equation}
By the local probability current $j^\mathrm{loc}(t,x)$ and the nonlocal probability current $j^\mathrm{nl}(t,x,y)$, we can rewrite the Fokker-Planck equation \eqref{nonlocalFk} as
\begin{equation*}
    \partial_t \rho(x,t) + \nabla\cdot j^\mathrm{loc}(t,x) + \int_{\mathbb{R}^n \setminus \{x\}} j^\mathrm{nl}(t,x,y) dy=0.
\end{equation*}
By the local probability current $j^\mathrm{loc}(t,x)$ and the nonlocal probability current $j^\mathrm{nl}(t,x,y)$, we introduce the definition of detailed balance for the jump diffusion.

In physics, the process $(X_t)_{t\geq 0}$ is stationary with its invariant measure $\mu(dx)$ means that the corresponding dynamic is in a steady state (including the equilibrium state and non-equilibrium steady state).
A probability measure $\mu$ is said to be an invariant measure of Markov semigroup $P_t$ if for every $f \in \mathcal{B}_b(\mathscr{B}(\mathbb{R}^n))$,
\begin{equation*}
    \int_{\mathbb{R}^n} P_t f(x) \mu(dx) = \int_{\mathbb{R}^n} f(x) \mu(dx).
\end{equation*}
Furthermore, if the invariant measure $\mu$ has a $C^2$ density $\rho_\mathrm{ss}$, then $\rho_\mathrm{ss}$ satisfies the stationary Fokker-Planck equation $\mathcal{L}^{\ast}\rho_\mathrm{ss} = 0$. 
In order to guarantee the existence of invariant measures (steady states) of the jump diffusions $(X_t)_{t\geq 0}$, we also need the following assumption for the SDE \eqref{LeqX}.
\begin{enumerate}
    \item[({\bf D})] (Lyapunov condition) There exists a positive function $w_1 \in C^2(\mathbb{R}^n)$, and a positive compact function $w_2 \geq c_6|x|^p$ for some constant $c_6>0$ and $p \geq 1$, so that $\mathcal{L}w_1 \leq c_5 - w_2$ holds for some constant $c_5>0$.
\end{enumerate}
Under Lyapunov condition ({\bf D}), the solution has the uniform moment estimate $$\sup_{t\geq0}\mathbb{E}|X_t|^p < c^{-1}_{6} \sup_{t\geq0}\mathbb{E}w_2(X_t) < \infty.$$ 
The associated semigroup of SDE \eqref{LeqX} has an invariant probability measure $\mu$, see e.g. \cite{H80}.

Let $(X_t)_{t\ge 0}$ be the jump diffusion with generator $\mathcal L$ and assume it admits an invariant measure $\mu_{\mathrm{ss}}(dx)=\rho_{\mathrm{ss}}(x)\,dx.$

\begin{defn}[Detailed balance]
We say that $(X_t)_{t\ge0}$ satisfies the detailed balance condition (with respect to the invariant measure $\mu_{\mathrm{ss}}=\rho_{\mathrm{ss}}(x)\,dx$) if the local and nonlocal steady currents vanish, i.e.,
\[
j^{\mathrm{loc}}_{\mathrm{ss}}(x)=0 \quad \text{for all }x\in\mathbb R^n,
\qquad 
j^{\mathrm{nl}}_{\mathrm{ss}}(x,y)=0 \quad \text{for all }x\neq y,
\]
where the local steady current is defined by
\[
j^{\mathrm{loc}}_{\mathrm{ss}}(x)
:= b(x)\rho_{\mathrm{ss}}(x)-\beta^{-1} A(x)\nabla\cdot\rho_{\mathrm{ss}}(x),
\]
and the nonlocal steady current is given by
\[
j^{\mathrm{nl}}_{\mathrm{ss}}(x,y)
:= \rho_{\mathrm{ss}}(x)\,k(x,y)-\rho_{\mathrm{ss}}(y)\,k(y,x).
\]
\end{defn}

In thermodynamics theory, the process is in an equilibrium state if the stationary process (with respect to the invariant measure) satisfies the detailed balance condition. The equilibrium systems exhibit a fundamental temporal symmetry with respect to the time $t$. This symmetry is mathematically characterized by time-reversibility, where the stochastic dynamics remain statistically invariant under time reversal. 
Let $(X_t)_{t\geq 0}$ be a stationary jump diffusion satisfying the SDE \eqref{LeqX} with invariant measure $\mu(dx)$. 
The invariant measure $\mu$ is called a reversible (probability) measure of $(X_t)_{t\geq 0}$ if for every $f,g \in C_{b}(\mathbb{R}^n)$,
\begin{equation*}
    \mathbb{E}^{\mu}[f(X_t)g(X_0)]  = \mathbb{E}^{\mu}[f(X_0)g(X_t)].
\end{equation*}

The connection between self-adjointness of the generator of an ergodic Markov process in $L^2(\mu)$ and the time-reversibility Markov process is established by Fukushima and Stroock \cite[Theorem 2.3]{FS86}.This result show that a invariant measure $\mu$ is reversible with respect to $(X_t)_{t\geq 0}$ if and only if the generator $\mathcal{L}$ is symmetric in the domain of $\mathcal{L}$: $D(\mathcal{L}) \cap L^2(\mathbb{R}^n, \mu)$, i.e. for every $f,g \in D(\mathcal{L}) \cap L^2(\mathbb{R}^n, \mu)$,
\begin{equation*}
    \int_{\mathbb{R}^n} f(x) \mathcal{L}g(x) \mu(dx) = \int_{\mathbb{R}^n} g(x) \mathcal{L} f(x) \mu(dx).
\end{equation*}
We also assume that the process $(X_t)_{t \geq 0}$ satisfies the following condition to ensure that the well-poseedness of the entropy production.
\begin{enumerate}
    \item[({\bf E})] (Regularity condition) For every $t \geq 0$, we assume that $\nabla\log \rho(t,\cdot)$ is $\mu^{X_t}-$a.s. bounded, where $\mu^{X_t}$ is the distribution of $X_t$. We also suppose that the jump kernel $k(x,y)$ satisfies that for every $t>0$,
\begin{equation}\label{ratiolim}
    \bar{k}(x,y):= k(x,y)\log \frac{k(x,y)}{k(y,x)} \in L^{\infty}(\mu^{X_t}(dx)dy),
\end{equation}
\end{enumerate} 

As established in \cite[Theorem 5.1]{ZL25}, we have the following theorem characterizing the equivalence among detailed balance, time-reversibility, along with explicit drift-kernel structures and $0$ entropy production rate.

\begin{prop}\label{TSeq}
Suppose that the assumptions (A)(B)(C)(D)(E) hold.
Let $\mu(dx) = \rho_\mathrm{ss}(x)dx$ be the Gibbs measure.
Then the following are equivalent: 
\begin{enumerate}
    \item The stationary process $(X_t)_{t\geq 0}$ with invariant density $\rho_\mathrm{ss}$ is time-reversible.
    \item System $X_t$ with its invariant measure $\mu(dx)=\rho_\mathrm{ss}(x)dx$ obeys detailed balance.
    \item The drift $b(x) = -A(x)\nabla V(x)$, and the jump kernel have a discrete gradient structure
    \begin{equation*}
         k(x,y) = \mathrm{e}^{-\beta[V(y)-V(x)]/2} s(x,y),
    \end{equation*}
    where $s(x,y) = s(y,x)$ is a nonnegative symmetric function from $\mathbb{R}^n \times \mathbb{R}^n$ to $\mathbb{R}$.
    \item The entropy production rate $\mathrm{e}^\mathrm{ss}_\mathrm{p} = 0$ at the stationary measure $\mu(dx) :=\rho_\mathrm{ss}(x)dx$, where the entropy production rate is defined as \eqref{Def:EPR}.
\end{enumerate}
\end{prop}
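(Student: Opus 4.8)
\medskip
\noindent\textbf{Proof strategy.}
The statement is quoted from \cite[Theorem~5.1]{ZL25}, so I only outline the route and refer there for the full details. The plan is to establish the chain $(1)\Leftrightarrow(2)\Leftrightarrow(3)$ together with $(2)\Leftrightarrow(4)$, so that all four conditions coincide. The bridge from the probabilistic condition $(1)$ to the current-based condition $(2)$ is the Fukushima--Stroock criterion recalled above: $(1)$ holds if and only if $\mathcal L$ is symmetric on $D(\mathcal L)\cap L^2(\R^n,\mu_{ss})$. Abbreviate the stationary currents by $j^{loc}_{ss}(x):=b(x)\rho_{ss}(x)-\beta^{-1}A(x)\nabla\rho_{ss}(x)$ and $j^{nl}_{ss}(x,y):=\rho_{ss}(x)k(x,y)-\rho_{ss}(y)k(y,x)$; then $(2)$ is the assertion $j^{loc}_{ss}\equiv0$ and $j^{nl}_{ss}\equiv0$, and the stationary Fokker--Planck equation yields $\nabla\cdot j^{loc}_{ss}(x)+\int_{\R^n\setminus\{x\}}j^{nl}_{ss}(x,y)\,dy=0$.

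For $(1)\Leftrightarrow(2)$ I would test the symmetry of $\mathcal L$ against $f,g\in C_c^\infty(\R^n)$ and split
\[
\int_{\R^n}\big(g\,\mathcal L f-f\,\mathcal L g\big)\rho_{ss}\,dx=I_{\mathrm{diff}}(f,g)+I_{\mathrm{jump}}(f,g)
\]
into the drift-plus-second-order part and the integral part of $\mathcal L$. Integration by parts, using that $A$ is symmetric so the $\rho_{ss}\nabla g\cdot A\nabla f$ contributions cancel, gives $I_{\mathrm{diff}}(f,g)=\int_{\R^n}(g\nabla f-f\nabla g)\cdot j^{loc}_{ss}\,dx$; relabelling the integration variables in the jump term gives $I_{\mathrm{jump}}(f,g)=\iint f(y)g(x)\,j^{nl}_{ss}(x,y)\,dx\,dy$ up to a term supported where both $f$ and $g$ are nonzero. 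Testing with $f,g$ of disjoint support annihilates $I_{\mathrm{diff}}$ and isolates the jump contribution; shrinking the supports to arbitrary distinct points forces $j^{nl}_{ss}\equiv0$, hence $I_{\mathrm{jump}}\equiv0$, so symmetry reduces to $I_{\mathrm{diff}}(f,g)=0$ for all $f,g$; combined with $\nabla\cdot j^{loc}_{ss}\equiv0$ (which now follows from the stationary Fokker--Planck identity and $j^{nl}_{ss}\equiv0$) this forces $j^{loc}_{ss}\equiv0$. The reverse implication is immediate. I expect this step to be the crux: one must justify the integrations by parts and the use of Fubini under assumptions (A)--(E), and, above all, argue that the local and nonlocal pieces of the symmetry defect cannot mask one another — which is precisely what the disjoint-support test functions achieve.

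The equivalence $(2)\Leftrightarrow(3)$ is then a direct computation with $\rho_{ss}=e^{-\beta V}/Z$. From $j^{loc}_{ss}\equiv0$ one reads $b\,\rho_{ss}=\beta^{-1}A\nabla\rho_{ss}=-A\nabla V\,\rho_{ss}$, hence $b=-A\nabla V$. From $j^{nl}_{ss}\equiv0$ one gets $k(x,y)/k(y,x)=\rho_{ss}(y)/\rho_{ss}(x)=e^{-\beta[V(y)-V(x)]}$, so that $s(x,y):=e^{\beta[V(y)-V(x)]/2}k(x,y)$ is nonnegative, satisfies $s(x,y)=s(y,x)$, and gives $k(x,y)=e^{-\beta[V(y)-V(x)]/2}s(x,y)$. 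Conversely, substituting the structure of $(3)$ into $j^{loc}_{ss}$ and $j^{nl}_{ss}$ and using $s(x,y)=s(y,x)$ makes both stationary currents vanish identically, which is $(2)$ (and incidentally re-confirms that the Gibbs density is stationary).

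Finally, for $(2)\Leftrightarrow(4)$ I would use that the stationary entropy production rate \eqref{Def:EPR} splits as $e_p^{ss}=e_p^{\mathrm{loc}}+e_p^{\mathrm{nl}}$, where $e_p^{\mathrm{loc}}$ is a nonnegative quadratic functional of $j^{loc}_{ss}$ vanishing precisely when $j^{loc}_{ss}\equiv0$, and
\[
e_p^{\mathrm{nl}}=\tfrac12\iint j^{nl}_{ss}(x,y)\,\log\frac{\rho_{ss}(x)k(x,y)}{\rho_{ss}(y)k(y,x)}\,dx\,dy\ \ge\ 0,
\]
since the integrand has the form $(a-b)\log(a/b)\ge0$; finiteness of these integrals is exactly what assumption (E) — in particular $\bar k\in L^\infty(\mu^{X_t}(dx)\,dy)$ — provides. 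Hence $e_p^{ss}=0$ if and only if $j^{loc}_{ss}\equiv0$ and $j^{nl}_{ss}\equiv0$, i.e.\ $(4)\Leftrightarrow(2)$. Steps two and three are thus essentially algebraic modulo the regularity bookkeeping ensuring the displayed integrals converge, so the whole argument hinges on the symmetry-splitting in the first step.
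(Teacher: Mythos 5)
The paper does not actually prove Proposition~\ref{TSeq}; it is imported verbatim from \cite[Theorem~5.1]{ZL25}, so there is no in-paper argument to compare against. Your outline is nevertheless a sound reconstruction of the standard proof: the Fukushima--Stroock reduction of reversibility to $L^2(\rho_{ss})$-symmetry, the disjoint-support test isolating the jump defect $\iint f(y)g(x)j^{nl}_{ss}(x,y)\,dx\,dy$ so that $j^{nl}_{ss}\equiv0$ off the diagonal, the identity $g\nabla f - f\nabla g = 2g\nabla f - \nabla(fg)$ together with $\nabla\cdot j^{loc}_{ss}=0$ to force $j^{loc}_{ss}\equiv0$, and the sign-definiteness of the two pieces of $e_p^{ss}$ (quadratic form in $j^{loc}_{ss}$ with $A^{-1}>0$, and $(a-b)\log(a/b)\ge0$ for the jump part) are all the right ingredients. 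One small point worth flagging: to recover statement (3) exactly as written you must take $\rho_{ss}=Z^{-1}e^{-\beta V}$ (as you do), whereas Section~3 of the paper writes $\rho_{ss}=Z^{-1}e^{-V}$; with the latter convention $j^{loc}_{ss}=0$ yields $b=-\beta^{-1}A\nabla V$ and $k(x,y)=e^{-[V(y)-V(x)]/2}s(x,y)$ without the factor $\beta$. This is an inconsistency in the paper's normalization of the quasi-potential rather than a gap in your argument, and your choice is the one compatible with the proposition as stated.
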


\section{Free energy dissipation}\label{Sec3}

In this section, we study the thermodynamics for jump diffusions based on relative entropy. 
Suppose that the stationary measure $\mu_\mathrm{ss}$ has density $\rho_\mathrm{ss}(x) = Z^{-1}\mathrm{e}^{-\beta V(x)}$, where $V$ is the quasi-potential and $Z=\int_{\mathbb{R}^n}\mathrm{e}^{-\beta V}dx$ is normalization constant. We remark that under detailed balance condition, the drift $b(x)$ and the jump kernel $k(x,y)$ are determined by the potential $V(x)$, 
and the stationary measure $\mu_\mathrm{ss}$ is given by the Gibbs measure: $\mu(dx) = Z^{-1} \mathrm{e}^{-\beta V(x)}dx$.
Thus the quasi-potential $V(x)$ can be viewed as a natural extension of the classical quasi-potential energy in thermodynamics.

Since the diffusion $X_t$ is moving in the quasi-potential $V(x)$, the total internal energy of $X_t$ is given by
\begin{equation*}
    U(t):= \int_{\mathbb{R}^n} V(x) \rho(t,x) dx = \mathbb{E}V(X_t).
\end{equation*}
Moreover, the Gibbs entropy of the jump diffusion $(X_t)_{t \geq 0}$ is 
\begin{equation*}
    S(t) = -\int_{\mathbb{R}^n}\rho(t,x)\log \rho(t,x) dx,
\end{equation*}
where $\rho(t,\cdot)$ is the probability density of $X_t$.
In the standard isothermal heat-bath setting, the free energy of $X_t$ is defined as 
\begin{equation}\label{freeener:DB}
    F(\rho(t)) := U(t) - \theta S(t) = \int_{\mathbb{R}^n} [V(x) - \theta \log \rho(t,x)]\rho(t,x) dx = \beta^{-1}\mathcal{H}(\mu_{t} | \mu_\mathrm{ss}) .
\end{equation}
where $\theta = \beta^{-1}$ is the temperature of the heat bath, and $\mathcal{H}(\mu_{t} | \mu_\mathrm{ss})$ is the relative entropy between $\mu_{t}$ and the Gibbs measure $\mu_\mathrm{ss}(dx) =  Z^{-1} \mathrm{e}^{-\beta V(x)}dx$.
Under detailed balance with respect to an invariant density $\rho_{\mathrm{ss}}$, the stationary state $\mu_{\mathrm{ss}}(dx)=\rho_{\mathrm{ss}}(x)\,dx$ is an equilibrium state: steady currents vanish and the steady entropy production rate is zero. Then the First Law implies that changes of $U(t)$ arise from heat exchanged with the bath, while the generalized free energy is nonincreasing and relaxes to its minimum value $0$ at equilibrium (see \cite{ZL25}).

Without the detailed balance condition, motivated by the detailed balance case, one can define the generalized free energy of $X_t$ by the relative entropy between $\mu_{t}$ and $\mu_\mathrm{ss}$:
\begin{equation*}
    F(\rho(t)) = \beta^{-1}\mathcal{H}(\mu_{t} | \mu_\mathrm{ss}) = \beta^{-1}\int_{\mathbb{R}^n} \log \left( \frac{\rho(t,x)}{\rho_\mathrm{ss}(x)} \right)\rho(t,x)dx.
\end{equation*}
Since $\log r \leq r-1$, $r\in \mathbb{R}$, we have
\begin{equation*}
    F(\rho(t)) =   -\beta^{-1}\int_{\mathbb{R}^n} \log \left( \frac{\rho_\mathrm{ss}(x)}{\rho(t,x)} \right)\rho(t,x)dx \geq -\beta^{-1}\int [\rho_\mathrm{ss}(x) - \rho(t,x)]dx = 0.
\end{equation*}
The equality holds if and only if $\rho_\mathrm{ss}(x) = \rho(t,x)$. Combining with the nonlocal Fokker-Planck equation, the free energy dissipation is given by
\begin{align}\label{FEdiss}
   \frac{dF(\rho(t))}{dt}
   = & \beta^{-1}\frac{d}{dt} \int_{\mathbb{R}^n} \log \left( \frac{\rho(t,x)}{\rho_\mathrm{ss}(x)}\right)\rho(t,x)dx \nonumber\\
   = & \beta^{-1}\int_{\mathbb{R}^n} \partial_t \rho(t,x) \left(\log \left( \frac{\rho(t,x)}{\rho_\mathrm{ss}(x)} \right) +1\right) dx \nonumber\\
   = & -\beta^{-1}\int_{\mathbb{R}^n} [ -b(x)\rho(t,x) + \beta^{-1} A(x)\nabla \rho(t,x)]^\mathrm{T}\nabla \log \left( \frac{\rho(t,x)}{\rho_\mathrm{ss}(x)} \right)dx \nonumber \\
   & - \beta^{-1}\int_{\mathbb{R}^n}\int_{\mathbb{R}^n \setminus \{x\}} [\rho(t,x)k(x,y) - \rho(t,y)k(y,x)]\log \left( \frac{\rho(t,x)}{\rho_\mathrm{ss}(x)} \right) dydx \nonumber\\
   = & -\frac{1}{\beta}\int_{\mathbb{R}^n} [ b(x)\rho(t,x) - \beta^{-1} A(x)\nabla \rho(t,x)]^\mathrm{T}(\nabla \log\rho_\mathrm{ss}(x) - \nabla \log\rho(t,x)) dx \nonumber \\
   & - \frac{1}{2\beta}\int_{\mathbb{R}^n}\int_{\mathbb{R}^n \setminus \{x\}} [\rho(t,x)k(x,y) - \rho(t,y)k(y,x)]\log \left( \frac{\rho(t,x) \rho_\mathrm{ss}(y)}{\rho(t,y)\rho_\mathrm{ss}(x)} \right) dydx \nonumber\\
    = & -\frac{1}{\beta}\int_{\mathbb{R}^n} (j^\mathrm{loc}(t,x))^\mathrm{T}(\nabla \log\rho_\mathrm{ss}(x) - \nabla \log\rho(t,x)) dx \nonumber \\
    & - \frac{1}{2\beta}\int_{\mathbb{R}^n}\int_{\mathbb{R}^n \setminus \{x\}} j^\mathrm{nl}(t,x,y)\log \left( \frac{\rho(t,x) \rho_\mathrm{ss}(y)}{\rho(t,y)\rho_\mathrm{ss}(x)} \right) dydx.
\end{align}
We decompose
\begin{equation*}
    \frac{dF(\rho(t))}{dt} = Q_\mathrm{hk}(t)- \beta^{-1}e_\mathrm{p}(t),
\end{equation*}
where 
\begin{align}\label{Def:Qhk}
    Q_\mathrm{hk}(t):= & \frac{1}{\beta}\int_{\mathbb{R}^n} (b(x)\rho(t,x) - \beta^{-1} A(x)\nabla \rho(t,x))^\mathrm{T}(\beta A^{-1}(x)b(x) - \nabla \log\rho_\mathrm{ss}(x) ) dx \nonumber\\
    & + \frac{1}{2\beta}\int_{\mathbb{R}^n}\int_{\mathbb{R}^n \setminus \{x\}} [\rho(t,x)k(x,y) - \rho(t,y)k(y,x)]\log \left( \frac{\rho_\mathrm{ss}(x) k(x,y)}{\rho_\mathrm{ss}(y)k(y,x)} \right) dydx 
\end{align}
is the housekeeping heat, and in analogy to the classical case \cite{S05} the entropy production rate is given by
\begin{align}\label{Def:EPR}
    e_\mathrm{p}(t) := & \beta\int_{\mathbb{R}^n}  \frac{(j^\mathrm{loc}(t,x))^\mathrm{T}A^{-1}(x)j^\mathrm{loc}(t,x)}{\rho(t,x)} dx \nonumber\\
    & + \frac{1}{2} \int_{\mathbb{R}^n}\int_{\mathbb{R}^n \setminus \{x\}} [\rho(t,x)k(x,y) - \rho(t,y)k(y,x)]\log \left( \frac{\rho(t,x)k(x,y)}{\rho(t,y)k(y,x)} \right) dydx.
\end{align}
We remark that the condition \eqref{ratiolim} is proposed to ensure that the entropy production rate is well-defined.

Note that under the detailed balance condition, the dissipation rate of free energy \eqref{freeener:DB} can be determined by the variation of Gibbs entropy and the change of internal energy. {\color{blue}The variation of Gibbs} entropy in the system is
\begin{align}\label{Dentropy}
   \frac{dS(t)}{dt}  = & -\frac{d}{dt} \int_{\mathbb{R}^n}\rho(t,x)\log \rho(t,x) dx \nonumber \\
                  = & -\int_{\mathbb{R}^n} j^\mathrm{loc}(t,x)\nabla \log \rho(t,x)dx  +\frac{1}{2} \int_{\mathbb{R}^n}\int_{\mathbb{R}^n \setminus \{x\}} j^\mathrm{nl}(t,x,y)\log \left( \frac{\rho(t,x)}{\rho(t,y)} \right) dydx.
\end{align}
The change of internal energy is 
\begin{align}\label{Din}
    \frac{d U(t)}{dt}
    = & \int_{\mathbb{R}^n} V(x) \mathcal{L}^{\ast} \rho(t,x)dx \nonumber\\
    = & \int_{\mathbb{R}^n} (\nabla V(x))^\mathrm{T} j^\mathrm{loc}(t,x)dx - \frac{1}{2}\int_{\mathbb{R}^n} \int_{\mathbb{R}^n \setminus \{x\}} j^\mathrm{nl}(t,x,y)  [V(x) - V(y)] dydx.
\end{align}
Therefore, combining \eqref{Dentropy} and \eqref{Din}, the free energy dissipation under detailed balance condition is given by the entropy production rate alone:
\begin{align*}
    \frac{dF(\rho(t))}{dt}
    = & \frac{dU(t)}{dt} - \theta \frac{dS(t)}{dt} \\
    = & -\int_{\mathbb{R}^n} \frac{(j^\mathrm{loc}(t,x))^\mathrm{T}A^{-1}(x)j^\mathrm{loc}(t,x)}{\rho(t,x)} dx \\
    & - \frac{1}{2} \int_{\mathbb{R}^n} \int_{\mathbb{R}^n \setminus \{x\}} j^\mathrm{nl}(t,x,y)\log \left( \frac{\rho(t,x)k(x,y)}{\rho(t,y)k(y,x)} \right) dydx\\
    = & - e_\mathrm{p}(t) \leq 0.
\end{align*}
Next, we give the physical interpretation of the housekeeping heat $Q_\mathrm{hk}(t)$.
Without the detailed balance condition, taking the quasi-potential $V(x) = -\beta^{-1}\log Z\rho_\mathrm{ss}(x)$, the vector field can be decomposed as 
\begin{equation*}
    A^{-1}(x)b(x) = -\nabla V(x) + A^{-1}(x)b(x) + \nabla V(x) =: - \nabla V(x) + f^\mathrm{loc}(x).
\end{equation*}
where $-\nabla V$ is the internal force from the quasi-potential $V$, and the divergence-free vector field $\beta f^\mathrm{loc}=\beta A^{-1}(x)b(x) - \nabla\log\rho_\mathrm{ss}(x)$ is the thermodynamic force from outside the system. This decomposition for diffusion processes was well studied by \cite{A04, Q13}.
Moreover, motivated by the case of the discrete jump processes in \cite{MNW08}, we also decompose the jump kernel as
\begin{align*}
    \frac{k(x,y)}{k(y,x)}
    &= \frac{\rho_\mathrm{ss}(x)k(x,y)}{\rho_\mathrm{ss}(y)k(y,x)} \frac{\rho_\mathrm{ss}(y)}{\rho_\mathrm{ss}(x)}\\
    &= \exp\left( \log \frac{\rho_\mathrm{ss}(x)k(x,y)}{\rho_\mathrm{ss}(y)k(y,x)} + \log \frac{\rho_\mathrm{ss}(y)}{\rho_\mathrm{ss}(x)} \right)\\
    &= \exp\beta\left( \beta^{-1}\log \frac{\rho_\mathrm{ss}(x)k(x,y)}{\rho_\mathrm{ss}(y)k(y,x)} + V(x)-V(y) \right)\\
    :&= \mathrm{e}^{\beta(f^\mathrm{nl}(x,y) + V(x)-V(y))}.
\end{align*}
where $V(x) - V(y)$ is the difference in quasi-potential between each $x$ and $y$, and
\begin{equation}
    f^\mathrm{nl}(x,y) = -f^\mathrm{nl}(y,x) = \beta^{-1}\log \left(\frac{\rho_\mathrm{ss}(x)k(x,y)}{\rho_\mathrm{ss}(y)k(y,x)} \right).
\end{equation}
is the external driving to the nonlocal transitions between each $x$ and $y$. 

Recall that the work $\delta W = \text{force} \times \delta x$, where $x$ is the position, $\delta$ is the variation or derivative in general senses. Then based on the probability currents $j^\mathrm{loc}(t,x)$, and $j^\mathrm{nl}(t,x,y)$, the change of total work $dW(t)/dt$ done by the thermodynamic force $f^\mathrm{loc}(x)$ and external driving $f^{\text{nl}}(x,y)$ is given by:
\begin{align*}
    \frac{dW(t)}{dt}
    =& \int_{\mathbb{R}^{n}} (j^\mathrm{loc}(t,x))^\mathrm{T} f^\mathrm{loc}(x) dx + \frac{1}{2} \int_{\mathbb{R}^{n}} \int_{\mathbb{R}^{n} \backslash \{x\}} j^\mathrm{nl}(t,x,y) f^\mathrm{nl}(x,y) dydx\\
    =& \int_{\mathbb{R}^{n}} \beta^{-1}(j^\mathrm{loc}(t,x))^\mathrm{T}( \beta A^{-1}(x)b(x) - \nabla\log\rho_\mathrm{ss}(x) )dx\\
    &+ \frac{1}{2}\int_{\mathbb{R}^{n}} \int_{\mathbb{R}^{n} \backslash \{x\}} j^\mathrm{nl}(t,x,y) \beta^{-1}\log \left(\frac{\rho_\mathrm{ss}(x)k(x,y)}{\rho_\mathrm{ss}(y)k(y,x)} \right)dx\\
    =& Q_\mathrm{hk}(t).
\end{align*}
\begin{remark}
   The identity $dW(t)/dt = Q_\mathrm{hk}(t)$ establishes that the housekeeping heat corresponds to the power input by the thermodynamic forces in the general non-equilibrium setting. When thermodynamic forces are absent $f^\mathrm{loc} = 0,  f^\mathrm{nl} = 0$, the system satisfies detailed balance with respect to the Gibbs measure. In this case, $dW/dt = 0$ implies $Q_\mathrm{hk}(t) = 0$, and the free energy dissipation reduces to $dF/dt = -\beta^{-1} e_\mathrm{p}(t)$, recovering the classical result for general Langevin dynamics \cite{ZL25}.
\end{remark}

In the next theorem, we show that the housekeeping heat $Q_\mathrm{hk}(t)$, entropy production rate $e_\mathrm{p}(t)$ are always nonnegative, and the free energy dissipation $dF(\rho(t))/dt$ is always nonpositive.

\begin{thm}
For every $t>0$, the housekeeping heat $Q_\mathrm{hk}(t)$ and entropy production rate $e_\mathrm{p}(t)$ satisfy
\begin{equation*}
    Q_\mathrm{hk}(t) \geq 0, \quad e_\mathrm{p}(t) \geq 0.
\end{equation*}
Moreover, the generalized free energy satisfies
\begin{equation*}
    \frac{dF(\rho(t))}{dt} \leq 0.
\end{equation*}
\end{thm}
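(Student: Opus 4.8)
The plan is to reduce all three assertions to one elementary scalar inequality after passing to the density ratio $\psi(t,x):=\rho(t,x)/\rho_{ss}(x)$, which is well defined and strictly positive since the transition density and $\rho_{ss}$ are smooth and positive. Write $W(x,y):=\rho_{ss}(x)k(x,y)$, so that $j^{nl}(t,x,y)=\psi(t,x)W(x,y)-\psi(t,y)W(y,x)$; introduce the stationary local current $j^{loc}_{ss}(x):=b(x)\rho_{ss}(x)-\beta^{-1}A(x)\nabla\rho_{ss}(x)$, and note that $\mathcal{L}^{\ast}\rho_{ss}=0$ reads $\nabla\cdot j^{loc}_{ss}(x)+\int_{\mathbb{R}^n\setminus\{x\}}[W(x,y)-W(y,x)]\,dy=0$. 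The scalar inequality I would establish first is
\[
\Phi(a,b,u,v):=(au-bv)\log\frac{u}{v}-(a-b)(u-v)\ \ge\ 0,\qquad a,b\ge 0,\ u,v>0,
\]
which I would prove by noting that $\Phi$ is positively homogeneous of degree one in $(u,v)$, so one may take $u=e^{s}$, $v=e^{-s}$; using $\Phi(a,b,u,v)=\Phi(b,a,v,u)$ to reduce to $s\ge 0$, one gets $\Phi=2a(se^{s}-\sinh s)+2b(\sinh s-se^{-s})\ge 0$, each bracket being nonnegative because $e^{\pm 2s}\ge 1\pm 2s$.

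For $e_p(t)\ge0$ this is almost immediate: the local term in \eqref{Def:EPR} is nonnegative because $A^{-1}(x)$ is positive definite and $\rho(t,x)>0$, and the nonlocal term has integrand $(\alpha-\gamma)\log(\alpha/\gamma)\ge0$ with $\alpha=\rho(t,x)k(x,y)$, $\gamma=\rho(t,y)k(y,x)$, well-definedness being guaranteed by \eqref{ratiolim}. For $Q_{hk}(t)\ge0$ I would start from \eqref{Def:Qhk} and insert the identities $j^{loc}(t,x)=\psi(t,x)\,j^{loc}_{ss}(x)-\beta^{-1}\rho_{ss}(x)A(x)\nabla\psi(t,x)$ and $\beta A^{-1}(x)b(x)-\nabla\log\rho_{ss}(x)=\beta\rho_{ss}(x)^{-1}A^{-1}(x)j^{loc}_{ss}(x)$, both immediate from the definition of $j^{loc}_{ss}$; then integrate by parts and use the stationary Fokker--Planck identity above together with the antisymmetry of $W(x,y)-W(y,x)$. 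Substituting $\psi$ into the nonlocal term as well, a direct computation collapses \eqref{Def:Qhk} into
\begin{align*}
Q_{hk}(t)&=\int_{\mathbb{R}^n}\frac{\psi(t,x)\,(j^{loc}_{ss}(x))^{T}A^{-1}(x)j^{loc}_{ss}(x)}{\rho_{ss}(x)}\,dx\\
&\quad+\frac{1}{2\beta}\int_{\mathbb{R}^n}\int_{\mathbb{R}^n\setminus\{x\}}\Phi\big(\psi(t,x),\psi(t,y),W(x,y),W(y,x)\big)\,dy\,dx,
\end{align*}
which is nonnegative term by term.

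For $dF(\rho(t))/dt\le0$ I would work from the last line of \eqref{FEdiss}. Using $\nabla\log\rho_{ss}-\nabla\log\rho=-\nabla\log\psi$ together with the same factorization of $j^{loc}$, the local contribution becomes $-\beta^{-2}\int_{\mathbb{R}^n}\rho_{ss}\psi^{-1}(\nabla\psi)^{T}A\nabla\psi\,dx+\beta^{-1}\int_{\mathbb{R}^n}(j^{loc}_{ss})^{T}\nabla\psi\,dx$, and the last integral is rewritten via the same integration-by-parts and stationary-Fokker--Planck step as $\frac{1}{2\beta}\int_{\mathbb{R}^n}\int_{\mathbb{R}^n\setminus\{x\}}[\psi(t,x)-\psi(t,y)][W(x,y)-W(y,x)]\,dy\,dx$; in the nonlocal contribution one has $\log\frac{\rho(t,x)\rho_{ss}(y)}{\rho(t,y)\rho_{ss}(x)}=\log\frac{\psi(t,x)}{\psi(t,y)}$. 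Collecting the two nonlocal-type pieces, their integrand is exactly $-\Phi\big(W(x,y),W(y,x),\psi(t,x),\psi(t,y)\big)$, so a direct computation gives
\begin{align*}
\frac{dF(\rho(t))}{dt}&=-\frac{1}{\beta^{2}}\int_{\mathbb{R}^n}\frac{\rho_{ss}(x)}{\psi(t,x)}(\nabla\psi(t,x))^{T}A(x)\nabla\psi(t,x)\,dx\\
&\quad-\frac{1}{2\beta}\int_{\mathbb{R}^n}\int_{\mathbb{R}^n\setminus\{x\}}\Phi\big(W(x,y),W(y,x),\psi(t,x),\psi(t,y)\big)\,dy\,dx\ \le\ 0,
\end{align*}
again term by term, by the scalar inequality and positive definiteness of $A$.

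The heart of the matter is the inequality $\Phi\ge 0$ and the observation that \emph{both} $Q_{hk}\ge0$ and $dF/dt\le0$ collapse to it once one changes variables to $\psi=\rho/\rho_{ss}$ and performs the integration by parts dictated by $\mathcal{L}^{\ast}\rho_{ss}=0$; the bookkeeping easiest to get wrong is the antisymmetry of $j^{nl}_{ss}$, the symmetrization of the double integrals, and the constants $\frac12$ and $\beta$. One must also invoke the regularity, integrability and Lyapunov hypotheses (D)--(E) (as in \cite{ZL25} and Proposition~\ref{TSeq}) to ensure that all integrals converge and that the boundary terms produced by the integrations by parts vanish at infinity; the nonlocal integrals are understood, as usual, on the set where $k(x,y)$ and $k(y,x)$ are both positive, which is exactly what \eqref{ratiolim} controls. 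As a shortcut for the single claim $dF/dt\le0$, one could instead observe that $t\mapsto\mathcal{H}(\mu_t\mid\mu_{ss})$ is nonincreasing because relative entropy contracts under the Markov kernels $P_s$ and $\mu_{ss}P_s=\mu_{ss}$, but that soft argument yields neither $Q_{hk}\ge0$ nor $e_p\ge0$.
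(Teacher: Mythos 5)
Your argument is correct, and I verified its two nontrivial ingredients. The four-point inequality $\Phi(a,b,u,v)=(au-bv)\log(u/v)-(a-b)(u-v)\ge 0$ does hold: fixing $u,v>0$, $\Phi$ is linear in $(a,b)$ with coefficients $u\log(u/v)-(u-v)\ge 0$ and $(u-v)-v\log(u/v)\ge 0$, which are exactly the two standard bounds $\log r\ge 1-1/r$ and $\log r\le r-1$; your homogeneity/$\sinh$ reduction is an equivalent route. The two exact decompositions you assert,
\begin{align*}
Q_{hk}(t)&=\int_{\mathbb{R}^n}\frac{\psi\,(j^{loc}_{ss})^{T}A^{-1}j^{loc}_{ss}}{\rho_{ss}}\,dx+\frac{1}{2\beta}\iint\Phi\big(\psi(x),\psi(y),W(x,y),W(y,x)\big)\,dy\,dx,\\
\frac{dF}{dt}&=-\frac{1}{\beta^{2}}\int_{\mathbb{R}^n}\frac{\rho_{ss}}{\psi}(\nabla\psi)^{T}A\nabla\psi\,dx-\frac{1}{2\beta}\iint\Phi\big(W(x,y),W(y,x),\psi(x),\psi(y)\big)\,dy\,dx,
\end{align*}
do follow from $j^{loc}=\psi j^{loc}_{ss}-\beta^{-1}\rho_{ss}A\nabla\psi$, one integration by parts, the stationary identity $\nabla\cdot j^{loc}_{ss}=\int[W(y,x)-W(x,y)]\,dy$, and symmetrization of the double integral; the cross terms cancel exactly as claimed, and consistency with $dF/dt=Q_{hk}-\beta^{-1}e_p$ checks out.

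The paper's proof has the same skeleton — the sign argument for $e_p$, and for $Q_{hk}$ and $dF/dt$ a reduction, via integration by parts against $\mathcal{L}^{\ast}\rho_{ss}=0$, to a positive quadratic form plus a nonlocal remainder controlled by $\log r\le r-1$ — so yours is not a fundamentally different route, but it is a sharper packaging. The paper obtains only one-sided estimates, and its displayed application of $\log r\le r-1$ to the nonlocal part of $Q_{hk}$ is written with the wrong orientation (as printed, the two cross terms would add rather than cancel); your exact identities sidestep that bookkeeping hazard entirely, which is precisely the pitfall you flag. What your version buys beyond the theorem itself: a single elementary lemma covering both $Q_{hk}\ge 0$ and $dF/dt\le 0$, a transparent characterization of the equality cases ($j^{loc}_{ss}=0$ and $W(x,y)=W(y,x)$ for $Q_{hk}=0$; $\psi$ constant for $dF/dt=0$), and a formula for $dF/dt$ that anticipates the Fisher-information identity of Section 4. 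Your closing remarks on integrability, boundary terms, and the support of $k(x,y)$ address exactly the points assumptions (D)--(E) are meant to cover and which the paper leaves implicit.
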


\begin{proof}
For the housekeeping heat $Q_\mathrm{hk}(t)$, since $\log r \leq r-1$, we have
\begin{align*}
    Q_\mathrm{hk}(t) \geq & \frac{1}{\beta}\int_{\mathbb{R}^n} [\beta A^{-1}(x)b(x) - \nabla \log\rho_\mathrm{ss}(x)]^\mathrm{T}\beta^{-1}A(x)[\beta A^{-1}(x)b(x) - \nabla \log\rho_\mathrm{ss}(x) ] \rho(t,x)dx \\
    & + \frac{1}{\beta}\int_{\mathbb{R}^n} [\nabla \log\rho_\mathrm{ss}(x) -  \nabla \log\rho(t,x)]^\mathrm{T}\beta^{-1}A(x)[\beta A^{-1}(x)b(x) - \nabla \log\rho_\mathrm{ss}(x) ] \rho(t,x)dx  \\
    & + \frac{1}{\beta}\int_{\mathbb{R}^n}\int_{\mathbb{R}^n \setminus \{x\}} \rho(t,x)k(x,y) \left( \frac{\rho_\mathrm{ss}(y) k(y,x)}{\rho_\mathrm{ss}(x)k(x,y)} -1\right) dydx.
\end{align*}
Using integration by parts, we have
\begin{align*}
    & \int_{\mathbb{R}^n} [\nabla \log\rho_\mathrm{ss}(x) -  \nabla \log\rho(t,x)]^\mathrm{T}\beta^{-1}A(x)[\beta A^{-1}(x)b(x) - \nabla \log\rho_\mathrm{ss}(x) ] \rho(t,x)dx\\
    = & -\int_{\mathbb{R}^n} \left[\nabla \log \left( \frac{\rho(t,x)}{\rho_\mathrm{ss}(x)} \right) \right]^\mathrm{T} [b(x)\rho_\mathrm{ss}(x) - \beta^{-1}A(x)\nabla\rho_\mathrm{ss}(x)]\frac{\rho(t,x)}{\rho_\mathrm{ss}(x)}dx \\
    = & \int_{\mathbb{R}^n} \nabla[b(x)\rho_\mathrm{ss}(x) - \beta^{-1}A(x)\nabla \rho_\mathrm{ss}(x)] \frac{\rho(t,x)}{\rho_\mathrm{ss}(x)}dx.
\end{align*}
For the nonlocal term, we have
\begin{align*}
    & \int_{\mathbb{R}^n}\int_{\mathbb{R}^n \setminus \{x\}} \rho(t,x)k(x,y) \left( \frac{\rho_\mathrm{ss}(y) k(y,x)}{\rho_\mathrm{ss}(x)k(x,y)} -1\right) dydx \\
    = & \int_{\mathbb{R}^n } \frac{\rho(t,x)}{\rho_\mathrm{ss}(x)} \int_{\mathbb{R}^n \setminus \{x\}}\rho_\mathrm{ss}(y) k(y,x) dydx - \int_{\mathbb{R}^n}\int_{\mathbb{R}^n \setminus \{x\}} \rho(t,x)k(x,y)dydx \\
    = & \int_{\mathbb{R}^n} \frac{\rho(t,x)}{\rho_\mathrm{ss}(x)} \left(\int_{\mathbb{R}^n \setminus \{x\}}\rho_\mathrm{ss}(y) k(y,x) - \rho_\mathrm{ss}(x)k(x,y)dy \right)dx.
\end{align*}
Recall that the stationary density satisfies the stationary Fokker-Planck equation, i.e. 
\begin{equation}\label{SnFKe}
    \nabla\cdot[b(x)\rho_\mathrm{ss}(x) - \beta^{-1}A(x)\nabla \rho_\mathrm{ss}(x)]+\int_{\mathbb{R}^n \setminus \{x\}}\rho_\mathrm{ss}(y) k(y,x) - \rho_\mathrm{ss}(x)k(x,y)dy  = 0.
\end{equation}
Combining above local and nonlocal terms,  if follows that
\begin{align*}
    Q_\mathrm{hk}(t) \geq & \frac{1}{\beta}\int_{\mathbb{R}^n} [A^{-1}(x)b(x) - \beta^{-1} \nabla \log\rho_\mathrm{ss}(x)]^\mathrm{T}\beta^{-1}A(x)[\beta A^{-1}(x)b(x) - \nabla \log\rho_\mathrm{ss}(x) ] \rho(t,x)dx \geq 0.
\end{align*}

For the entropy production rate $e_\mathrm{p}(t)$, note that $j^\mathrm{nl}(t,x,y)= \rho(t,x)k(x,y) - \rho(t,y)k(y,x)$ and $\log \left( \frac{\rho(t,x)k(x,y)}{\rho(t,y)k(y,x)} \right)$ are both positive or negative. Thus
\begin{equation*}
    e_\mathrm{p}(t)=\beta\int_{\mathbb{R}^n} \frac{(j^\mathrm{loc}(t,x))^\mathrm{T}A^{-1}(x)j^\mathrm{loc}(t,x) }{\rho(t,x)}dx+\int_{\mathbb{R}^n}\int_{\mathbb{R}^n \setminus \{x\}} j^\mathrm{nl}(t,x,y)\log \left( \frac{\rho(t,x)k(x,y)}{\rho(t,y)k(y,x)} \right) dydx \geq 0.
\end{equation*}
Moreover, this equality holds if and only if $j^\mathrm{nl}(t,x,y)= \rho(t,x)k(x,y) - \rho(t,y)k(y,x) = 0$ for every $x,y \in \mathbb{R}^n$ and every $t \geq 0$, and $j^\mathrm{loc}(t,x) = 0$ for every $x\in \mathbb{R}^n$ and every $t \geq 0$.

For the free energy dissipation, using the stationary Fokker-Planck equation \eqref{SnFKe} again, we have
\begin{align*}
    \frac{dF(t)}{dt}
    \leq  & - \frac{1}{\beta}\int_{\mathbb{R}^n} [\nabla \log\rho_\mathrm{ss}(x) - \nabla \log\rho(t,x) ]^\mathrm{T}\beta^{-1}A(x)(\nabla \log\rho_\mathrm{ss}(x) - \nabla \log\rho(t,x) ) \rho(t,x)dx \\
    &  - \frac{1}{2\beta} \int_{\mathbb{R}^n}\int_{\mathbb{R}^n \setminus \{x\}} [\rho(t,x)k(x,y) - \rho(t,y)k(y,x)]\log \left( \frac{\rho(t,x)k(x,y)}{\rho(t,y)k(y,x)} \right) dydx  \\
    & + \frac{1}{2\beta}\int_{\mathbb{R}^n} \int_{\mathbb{R}^n \setminus \{x\}} j^\mathrm{nl}(t,x,y)\log \left( \frac{k(x,y)\rho_\mathrm{ss}(y)}{k(y,x)\rho_\mathrm{ss}(x)}\right)dy dx\\
    \leq  & - \frac{1}{\beta}\int_{\mathbb{R}^n} [\nabla \log\rho(t,x) - \nabla \log\rho_\mathrm{ss}(x) ]^\mathrm{T}\beta A(x)(\nabla \log\rho(t,x) - \nabla \log\rho_\mathrm{ss}(x) ) \rho(t,x)dx  \\
    & - \frac{1}{2\beta} \int_{\mathbb{R}^n}\int_{\mathbb{R}^n \setminus \{x\}} j^\mathrm{nl}(t,x,y) \log \left( \frac{\rho(t,x)k(x,y)}{\rho(t,y)k(y,x)} \right) dydx  \\
    & + \frac{1}{4\beta}\int_{\mathbb{R}^n} \int_{\mathbb{R}^n \setminus \{x\}} j^\mathrm{nl}(t,x,y)\left[ \log \left( \frac{k(x,y)\rho_\mathrm{ss}(y)}{k(y,x)\rho_\mathrm{ss}(x)}\right) + \log \left( \frac{k(y,x)\rho_\mathrm{ss}(x)}{k(x,y)\rho_\mathrm{ss}(y)}\right) \right]dy dx. \\
    \leq & 0.
\end{align*}
Here, we use the fact that 
\begin{equation*}
     \log \left( \frac{k(x,y)\rho_\mathrm{ss}(y)}{k(y,x)\rho_\mathrm{ss}(x)}\right) + \log \left( \frac{k(y,x)\rho_\mathrm{ss}(x)}{k(x,y)\rho_\mathrm{ss}(y)}\right) = 0.
\end{equation*}
This implies the dissipative property of the free energy.
\end{proof}

Our result reveals a fundamental thermodynamic structure underlying general jump-diffusion processes, 
extending the classical understanding of free energy dissipation in Langevin systems to a broader class of stochastic dynamics with jumps. 
In next remark, we discuss the connection between our decomposition of free energy dissipation and the adiabatic/non-adiabatic decomposition in stochastic thermodynamics.

\begin{remark}
The decomposition of the free energy dissipation rate
\begin{equation}\label{eq:decomp_entropy}
  -\frac{dF(\rho(t))}{dt} = \beta^{-1} e_\mathrm{p}(t) - Q_\mathrm{hk}(t)
\end{equation}
can be structurally aligned with the canonical decomposition of entropy production in stochastic thermodynamics, commonly referred to as the adiabatic and non-adiabatic split \cite{esposito2010,EV10,VE10}.
Multiplying \eqref{eq:decomp_entropy} by $\beta$, we obtain the Clausius-type inequality
\begin{equation}
    e_\mathrm{p}(t) = \beta Q_\mathrm{hk}(t) + \left( -\beta \frac{dF(\rho(t))}{dt} \right) \geq 0.
\end{equation}
Here, the total entropy production rate $e_\mathrm{p}(t)$ corresponds to the variation of the total entropy $\dot{S}_{tot}$. The term associated with the free energy decay, $-\beta \dot{F}(t)$, represents the \emph{non-adiabatic entropy production rate} (often denoted as $\dot{S}_{na}$), which quantifies the dissipation due to the relaxation of the system towards its steady state $\rho_\mathrm{ss}$.
Consequently, the term $\beta Q_\mathrm{hk}(t)$ corresponds to the \emph{adiabatic entropy production rate} (denoted as $\dot{S}_{ad}$), which characterizes the dissipation required to maintain the non-equilibrium steady state against the non-conservative forces.
Specifically, $Q_\mathrm{hk}(t)$ provides the explicit expression for the housekeeping heat involved in the nonlocal transport of probability mass.
\end{remark}



\section{A decomposition of diffusion process}\label{Sec4}
The analysis in the previous section has shown that, compared to a Langevin system, the free energy dissipation rate of system $(X_t)_{t\ge0}$ under external forcing includes not only the entropy production rate but also an additional housekeeping heat term. This leads us to consider whether, there exists a subsystem of $(X_t)_{t\ge0}$, whose free energy dissipation is entirely characterized by a certain form of entropy production, as in the Langevin case.

It is known that a Langevin system with a symmetric generator $\mathcal{G}$ in $D(\mathcal{G}) \cap L^2(\mathbb{R}^n, \mu)$, satisfies detailed balance and time-reversal symmetry at its stationary measure $\mu$
\begin{equation*}
    \int_{\mathbb{R}^n} f(x) \mathcal{G}g(x) \mu(dx) = \int_{\mathbb{R}^n} g(x) \mathcal{G} f(x) \mu(dx).
\end{equation*}
Building on this structure, we further define the weighted Hilbert space $L^2(\mathbb{R}^n, \rho_\mathrm{ss}(x) dx)$ equipped with the inner product
\begin{equation*}
    \langle \varphi, \psi \rangle = \int_{\mathbb{R}^n} \varphi(x) \psi(x)  \rho_\mathrm{ss}(x)  dx, \quad \forall \varphi, \psi \in C_\mathrm{c}^\infty(\mathbb{R}^n),
\end{equation*}
which is central to analyzing the generator's adjoint structure under the stationary measure. This inner product induces the $\rho_\mathrm{ss}$-weighted adjoint $\mathcal{L}^\dagger$ of the generator $\mathcal{L}$, defined by the relation:
\begin{equation*}
    \langle \mathcal{L} \varphi, \psi \rangle = \langle \varphi, \mathcal{L}^\dagger \psi \rangle.
\end{equation*}
Therefore, our objective is to identify a component within $\mathcal{L}$ that resembles a Langevin system—specifically, its symmetric part. This leads us to the following theorem:
\begin{thm}
    The generator of $X_t$ can decompose as a symmetric operator and an anti-symmetric operator, that is, $\mathcal{L}=\mathcal{L}_\mathrm{s} + \mathcal{L}_\mathrm{a}$, where
    \begin{align}\label{op:Ls}
        \mathcal{L}_\mathrm{s}\varphi
        =& \beta^{-1}[A(x)\nabla\log\rho_\mathrm{ss}(x)]^\mathrm{T}\nabla\varphi(x) + \beta^{-1} \nabla\cdot (A(x)\nabla \varphi(x)) \nonumber \\
        &+ \frac{1}{2}\int_{\mathbb{R}^{n}\setminus\{x\}} \Big( \varphi(y)-\varphi(x) \Big) \left( \frac{k(x,y)}{\rho_\mathrm{ss}(y)} + \frac{k(y,x)}{\rho_\mathrm{ss}(x)} \right)\rho_\mathrm{ss}(y) dy,
    \end{align}
and    
\begin{align}\label{op:La}
     \mathcal{L}_\mathrm{a}\varphi=
    & \Big[ b(x)  - \beta^{-1}A(x)\nabla\log\rho_\mathrm{ss}(x) \Big]^\mathrm{T}\nabla \varphi(x) \nonumber \\
    &+ \frac{1}{2}\int_{\mathbb{R}^{n}\setminus\{x\}} \Big( \varphi(y)-\varphi(x) \Big) \left( \frac{k(x,y)}{\rho_\mathrm{ss}(y)} - \frac{k(y,x)}{\rho_\mathrm{ss}(x)} \right)\rho_\mathrm{ss}(y) dy.
\end{align}
\end{thm}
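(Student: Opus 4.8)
The cleanest route is to reduce the statement to a single computation of the $\rho_{ss}$–weighted adjoint $\mathcal{L}^\dagger$ and then to recognise the operators $\mathcal{L}_s$ and $\mathcal{L}_a$ of \eqref{op:Ls}--\eqref{op:La} as its symmetric and anti-symmetric parts. First one checks, by elementary algebra, that these two operators do add up to $\mathcal{L}$: the drift contributions $\beta^{-1}A\nabla\log\rho_{ss}\cdot\nabla\varphi$ and $(b-\beta^{-1}A\nabla\log\rho_{ss})\cdot\nabla\varphi$ sum to $b\cdot\nabla\varphi$; the second–order term $\beta^{-1}\nabla\cdot(A\nabla\varphi)$ lies entirely in $\mathcal{L}_s$; and in the two jump integrals the weights $\tfrac12\big(k(x,y)/\rho_{ss}(y)\pm k(y,x)/\rho_{ss}(x)\big)\rho_{ss}(y)$ add to $k(x,y)$, recovering $\int(\varphi(y)-\varphi(x))k(x,y)\,dy$. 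Hence it suffices to show $\mathcal{L}_s=\tfrac12(\mathcal{L}+\mathcal{L}^\dagger)$ and $\mathcal{L}_a=\tfrac12(\mathcal{L}-\mathcal{L}^\dagger)$, after which symmetry of $\mathcal{L}_s$ and anti-symmetry of $\mathcal{L}_a$ are immediate from $(\mathcal{L}^\dagger)^\dagger=\mathcal{L}$.

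\textbf{Computing $\mathcal{L}^\dagger$.} Working on a core such as $C_c^\infty(\mathbb{R}^n)$, for the weighted inner product one has $\langle\mathcal{L}\varphi,\psi\rangle=\int(\mathcal{L}\varphi)\,\psi\rho_{ss}\,dx=\int\varphi\,\mathcal{L}^\ast(\psi\rho_{ss})\,dx=\langle\varphi,\rho_{ss}^{-1}\mathcal{L}^\ast(\psi\rho_{ss})\rangle$, so $\mathcal{L}^\dagger\psi=\rho_{ss}^{-1}\mathcal{L}^\ast(\psi\rho_{ss})$ with $\mathcal{L}^\ast$ the $L^2(dx)$–adjoint given in Section~\ref{Sec2}. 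Expanding $\mathcal{L}^\ast(\psi\rho_{ss})$ by the Leibniz rule, the symmetry of $A$ makes the two cross terms $\nabla\psi\cdot A\nabla\rho_{ss}$ and $\nabla\rho_{ss}\cdot A\nabla\psi$ coincide, and after dividing by $\rho_{ss}$ one collects: a first–order part $-b\cdot\nabla\psi+2\beta^{-1}A\nabla\log\rho_{ss}\cdot\nabla\psi+\beta^{-1}\nabla\cdot(A\nabla\psi)$; a zeroth–order term $\psi\,\rho_{ss}^{-1}\nabla\cdot[-b\rho_{ss}+\beta^{-1}A\nabla\rho_{ss}]$; and the nonlocal term $\int\big(\tfrac{k(y,x)\rho_{ss}(y)}{\rho_{ss}(x)}\psi(y)-k(x,y)\psi(x)\big)\,dy$. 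The key point is that the stationary equation $\mathcal{L}^\ast\rho_{ss}=0$ identifies the bracket in the zeroth–order term with $-\int[k(y,x)\rho_{ss}(y)-k(x,y)\rho_{ss}(x)]\,dy$; substituting this and combining with the nonlocal term, the $\psi(x)$–pieces telescope and one is left with
\[
\mathcal{L}^\dagger\psi = -b\cdot\nabla\psi + 2\beta^{-1}A\nabla\log\rho_{ss}\cdot\nabla\psi + \beta^{-1}\nabla\cdot(A\nabla\psi) + \int_{\mathbb{R}^{n}\setminus\{x\}}\big(\psi(y)-\psi(x)\big)\frac{k(y,x)}{\rho_{ss}(x)}\,\rho_{ss}(y)\,dy .
\]

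\textbf{Identification.} Forming $\tfrac12(\mathcal{L}\pm\mathcal{L}^\dagger)$ term by term: the first–order drifts give $\beta^{-1}A\nabla\log\rho_{ss}\cdot\nabla\varphi$ in the $+$ combination and $(b-\beta^{-1}A\nabla\log\rho_{ss})\cdot\nabla\varphi$ in the $-$ combination; the second–order term survives only in the $+$ combination; and since $k(x,y)=\tfrac{k(x,y)}{\rho_{ss}(y)}\rho_{ss}(y)$, the jump parts become $\tfrac12\int(\varphi(y)-\varphi(x))\big(\tfrac{k(x,y)}{\rho_{ss}(y)}\pm\tfrac{k(y,x)}{\rho_{ss}(x)}\big)\rho_{ss}(y)\,dy$. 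These are precisely \eqref{op:Ls} and \eqref{op:La}, which proves the theorem.

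\textbf{Main obstacle.} The delicate point is the zeroth–order term produced when $\rho_{ss}$ is transported through $\mathcal{L}^\ast$: it does \emph{not} vanish term by term, since the stationary current $j^{loc}_{ss}:=b\rho_{ss}-\beta^{-1}A\nabla\rho_{ss}$ is generally not divergence–free in the presence of jumps, and only the global stationarity $\mathcal{L}^\ast\rho_{ss}=0$ lets it be absorbed into the nonlocal part. Equivalently, in a direct verification one symmetrises the jump double integral of $\mathcal{L}_a$ against the antisymmetric measure $\big(k(x,y)\rho_{ss}(x)-k(y,x)\rho_{ss}(y)\big)\,dx\,dy$ and finds a leftover $\int\varphi\psi\,\nabla\cdot j^{loc}_{ss}\,dx$ that cancels exactly the leftover coming from integration by parts in the drift term $(b-\beta^{-1}A\nabla\log\rho_{ss})\cdot\nabla\varphi$ — again by $\mathcal{L}^\ast\rho_{ss}=0$. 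The remaining technical points are routine: the vanishing of boundary terms in the integrations by parts (from decay of $\rho_{ss}$, or by first working on $C_c^\infty$), and the applicability of Fubini to the nonlocal double integrals, ensured by assumptions (C) and (E).
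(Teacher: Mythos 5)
Your proposal is correct and follows essentially the same route as the paper: compute the weighted adjoint $\mathcal{L}^\dagger\psi=\rho_{ss}^{-1}\mathcal{L}^\ast(\rho_{ss}\psi)$ and set $\mathcal{L}_{s}=\tfrac12(\mathcal{L}+\mathcal{L}^\dagger)$, $\mathcal{L}_{a}=\tfrac12(\mathcal{L}-\mathcal{L}^\dagger)$. In fact you make explicit the one step the paper compresses into ``direct calculation shows,'' namely that the zeroth--order term arising from transporting $\rho_{ss}$ through $\mathcal{L}^\ast$ is absorbed into the nonlocal part via the stationary equation $\mathcal{L}^\ast\rho_{ss}=0$, which is exactly what makes the stated formulas come out.
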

\begin{proof}
Recall that $\mathcal{L}\varphi(x) = \mathcal{L}_\mathrm{loc}\varphi(x) + \mathcal{L}_\mathrm{nl}\varphi(x)$ and $\mathcal{L}^{\ast}\varphi(x) = \mathcal{L}^{\ast}_\mathrm{loc}\varphi(x) + \mathcal{L}^{\ast}_\mathrm{nl}\varphi(x) = -\nabla j^\mathrm{loc}(t,x)-\int j^\mathrm{nl}(t,x,y)$, where $\mathcal{L}^{\ast}$ is the adjoint operator of $\mathcal{L}$ under usual inner product, now we compute the local part of $\mathcal{L}^\dagger$
\begin{align*}
    \langle \mathcal{L}_\mathrm{loc}\varphi, \psi \rangle 
    &= \int_{\mathbb{R}^{n}}  \Big\{ b^\mathrm{T}(x) \nabla \varphi(x) + \beta^{-1} \nabla\cdot (A(x)\nabla \varphi(x)) \Big\}\rho_\mathrm{ss}(x)\psi(x) dx\\
    &= \int_{\mathbb{R}^{n}} \Big\{ \nabla\cdot \Big[-b(x)\rho_\mathrm{ss}(x)\psi(x) + \beta^{-1} A(x)\nabla (\rho_\mathrm{ss}(x)\psi(x)) \Big]  \Big\} \varphi(x) dx \\
    &= \langle \varphi, \rho_\mathrm{ss}^{-1}\mathcal{L}_\mathrm{loc}^{*} ( \rho_\mathrm{ss}\psi ) \rangle := \langle \varphi, \mathcal{L}_\mathrm{loc}^\dagger \psi \rangle.
\end{align*}
And the non-local part
\begin{align*}
    \langle \mathcal{L}_\mathrm{nl}\varphi, \psi \rangle
    &= \int_{\mathbb{R}^n}\psi(x)\rho_\mathrm{ss}(x)\int_{\mathbb{R}^n \setminus \{ x \} } [\varphi(y) - \varphi(x)]k(x,y) dydx\\
    &= \int_{\mathbb{R}^n}\int_{\mathbb{R}^n \setminus \{ x \} }\psi(x)\rho_\mathrm{ss}(x)\varphi(y)k(x,y)dydx - \int_{\mathbb{R}^n}\int_{\mathbb{R}^n \setminus \{ x \} }\psi(x)\rho_\mathrm{ss}(x)\varphi(x)k(x,y)dydx\\
    &= \int_{\mathbb{R}^n} \varphi(y) \int_{\mathbb{R}^n \setminus \{ y \} }\psi(x)\rho_\mathrm{ss}(x)k(x,y)dxdy - \int_{\mathbb{R}^n} \varphi(x)\rho_\mathrm{ss}(x) \int_{\mathbb{R}^n \setminus \{ x \} }\psi(x)k(x,y)dydx\\
    &= \int_{\mathbb{R}^n} \varphi(x)\rho_\mathrm{ss}(x)\rho_\mathrm{ss}^{-1}(x) \int_{\mathbb{R}^n \setminus \{ x \} }\psi(y)\rho_\mathrm{ss}(y)k(y,x)dydx - \int_{\mathbb{R}^n} \varphi(x)\rho_\mathrm{ss}(x) \int_{\mathbb{R}^n \setminus \{ x \} }\psi(x)k(x,y)dydx\\
    &= \left\langle \varphi, \rho_\mathrm{ss}^{-1}(x)\int_{\mathbb{R}^n \setminus \{ x \} } \Big[\rho_\mathrm{ss}(y)\psi(y)k(y,x)-\rho_\mathrm{ss}(x)\psi(x)k(x,y) \Big] dy \right\rangle \\
    &= \langle \varphi , \rho_\mathrm{ss}^{-1}\mathcal{L}_\mathrm{nl}^{*} ( \rho_\mathrm{ss}\psi ) \rangle := \langle \varphi, \mathcal{L}_\mathrm{nl}^\dagger \psi \rangle.
\end{align*}
So we have
\begin{align*}
    \mathcal{L}^\dagger \varphi(x)
    =& \mathcal{L}_\mathrm{loc}^\dagger \varphi(x) + \mathcal{L}_\mathrm{nl}^\dagger \varphi(x) = \rho_\mathrm{ss}^{-1}\mathcal{L}_\mathrm{loc}^{*} ( \rho_\mathrm{ss}\varphi ) + \rho_\mathrm{ss}^{-1}\mathcal{L}_\mathrm{nl}^{*} ( \rho_\mathrm{ss}\varphi ) \\
    =& \rho_\mathrm{ss}^{-1}(x)\nabla\cdot \Big[-b(x)\rho_\mathrm{ss}(x)\varphi(x) + \beta^{-1} A(x)\nabla (\rho_\mathrm{ss}(x)\varphi(x)) \Big]\\
    &+ \rho_\mathrm{ss}^{-1}(x)\int_{\mathbb{R}^n \setminus \{ x \} } \Big[\rho_\mathrm{ss}(y)\varphi(y)k(y,x)-\rho_\mathrm{ss}(x)\varphi(x)k(x,y) \Big] dy\\
    =& \rho_\mathrm{ss}^{-1}\mathcal{L}^{*} ( \rho_\mathrm{ss}\varphi ).
\end{align*}
The symmetric operator and the anti-symmetric operator are defined as
\begin{gather*}
    \mathcal{L}_\mathrm{s}\varphi = \frac{1}{2}\left[\mathcal{L}\varphi + \mathcal{L}^\dagger\varphi\right], \quad    \mathcal{L}_\mathrm{a}\varphi = \frac{1}{2}\left[\mathcal{L}\varphi - \mathcal{L}^\dagger\varphi\right].
\end{gather*}
Then direct calculation shows that the symmetric operator $\mathcal{L}_\mathrm{s}$ is given by (\ref{op:Ls}), and the anti-symmetric operator $\mathcal{L}_\mathrm{a}$ is given by (\ref{op:La}).
\end{proof}

Using the decomposition of the generator, we can express the free energy dissipation rate as
\begin{align*}
    \frac{dF(\rho(t))}{dt}
    =& \beta^{-1}\int_{\mathbb{R}^n} \mathcal{L}^\ast \rho(t,x) \left(\log \left( \frac{\rho(t,x)}{\rho_\mathrm{ss}(x)} \right) +1 \right) dx \\
    =& \beta^{-1}\int_{\mathbb{R}^n} \big[ (\mathcal{L}_\mathrm{a})^*\rho(t,x) + (\mathcal{L}_\mathrm{s})^*\rho(t,x) \big]\left(\log \left( \frac{\rho(t,x)}{\rho_\mathrm{ss}(x)} \right) +1 \right) dx\\
    :=& F'_\mathrm{a}(\rho(t)) + F'_\mathrm{s}(\rho(t)).
\end{align*}
In what follows, we will further compute the respective contributions of these two components to the overall dissipation and examine their relationship with both the entropy production rate and the housekeeping heat.

\subsection{Canonical conservative dynamics }

For the anti-symmetric operator $\mathcal{L}_{a}$, we introduce 
\begin{equation*}
    v_\mathrm{a}(x) =  b(x)  - \beta^{-1}A(x)\nabla\log\rho_\mathrm{ss}(x), \quad k_\mathrm{a}(x,y) = \frac{1}{2}\left( \frac{k(x,y)}{\rho_\mathrm{ss}(y)} - \frac{k(y,x)}{\rho_\mathrm{ss}(x)} \right)\rho_\mathrm{ss}(y), \ \forall x,y \in \mathbb{R}^n. 
\end{equation*}
Then the anti-symmetric operator could be rewritten as
\begin{align*}
    \mathcal{L}_\mathrm{a}\varphi
    :=& v_\mathrm{a}^\mathrm{T}(x)\nabla \varphi(x) + \int_{\mathbb{R}^{n}\setminus\{x\}} \Big( \varphi(y)-\varphi(x) \Big)k_\mathrm{a}(x,y)dy,
\end{align*}
and the Fokker-Planck (adjoint) operator associated with $\mathcal{L}_\mathrm{a}$ is
\begin{align*}
    (\mathcal{L}_\mathrm{a})^*\varphi
    =& -\nabla\cdot (v_\mathrm{a}(x)\varphi(x)) + \int_{\mathbb{R}^n \setminus \{ x \} } \Big[ \varphi(y)k_\mathrm{a}(y,x) - \varphi(x)k_\mathrm{a}(x,y) \Big] dy.
\end{align*}
The dynamics generated by $\mathcal{L}_\mathrm{a}$ is a canonical conservative dynamics
\begin{equation*}
    \partial_t \rho_\mathrm{a}(t,x) = (\mathcal{L}_{a})^* \rho_\mathrm{a}(t,x).
\end{equation*}
The associated local and nonlocal probability current are given by
\begin{align*}
    j^\mathrm{loc}_\mathrm{a}(t,x) =  -v_\mathrm{a}(x)\rho(t,x),
\end{align*}
and 
\begin{align*}
    j^\mathrm{nl}_\mathrm{a}(t,x,y)
    =& \rho(t,x)k_\mathrm{a}(x,y) - \rho(t,y)k_\mathrm{a}(y,x)\\
    =& \frac{1}{2}\left( \frac{k(x,y)}{\rho_\mathrm{ss}(y)} - \frac{k(y,x)}{\rho_\mathrm{ss}(x)} \right) \Big( \rho_\mathrm{ss}(y)\rho(t,x) + \rho_\mathrm{ss}(x)\rho(t,y) \Big).
\end{align*}
The canonical conservative dynamics (current) preserves the generalized free energy, we will show this property in the following lemma and proposition.

\begin{lemma}
The Fokker-Planck operator satisfies $(\mathcal{L}_\mathrm{a})^*\rho_\mathrm{ss}(x)=0$.\\
Moreover, the drift $v_\mathrm{a}$ satisfies 
\begin{equation}\label{anti-va}
    \nabla\cdot (v_\mathrm{a}(x)\rho_\mathrm{ss}(x)) = (\nabla v_\mathrm{a}(x) + v_\mathrm{a}(x)\nabla\log \rho_\mathrm{ss}(x))\rho_\mathrm{ss}(x) = 0
\end{equation}
and $k_{a}(x,y)$ vanishes upon integration over $y$ i.e.
        \begin{equation}\label{ka-0}
            \int_{\mathbb{R}^n \setminus \{x\}} k_{a}(x,y)dy = 0,\quad \forall x \in \mathbb{R}^n.
        \end{equation}
\end{lemma}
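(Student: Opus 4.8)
The plan is to read off all three statements from the explicit formulas for $(\mathcal{L}_a)^{*}$, $v_a$ and $k_a$ displayed just above the lemma, using as the only analytic ingredient the stationary Fokker--Planck equation \eqref{SnFKe}, i.e.\ $\mathcal{L}^{*}\rho_{ss}=0$. It will be convenient to write the stationary currents as $j^{loc}_{ss}(x):=b(x)\rho_{ss}(x)-\beta^{-1}A(x)\nabla\rho_{ss}(x)$ and $j^{nl}_{ss}(x,y):=\rho_{ss}(x)k(x,y)-\rho_{ss}(y)k(y,x)$, so that \eqref{SnFKe} reads $\nabla\cdot j^{loc}_{ss}(x)+\int_{\mathbb{R}^{n}\setminus\{x\}}j^{nl}_{ss}(x,y)\,dy=0$, and $j^{nl}_{ss}$ is antisymmetric under $x\leftrightarrow y$.

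For $(\mathcal{L}_a)^{*}\rho_{ss}=0$ I would substitute $\rho_{ss}$ directly. Since $\nabla\log\rho_{ss}=\nabla\rho_{ss}/\rho_{ss}$, one has $v_a(x)\rho_{ss}(x)=j^{loc}_{ss}(x)$, so the local part of $(\mathcal{L}_a)^{*}\rho_{ss}$ is $-\nabla\cdot j^{loc}_{ss}$. A one-line manipulation of the definition of $k_a$ gives $\rho_{ss}(x)k_a(x,y)=\tfrac12 j^{nl}_{ss}(x,y)$, whence $\rho_{ss}(y)k_a(y,x)-\rho_{ss}(x)k_a(x,y)=-j^{nl}_{ss}(x,y)$ by antisymmetry, so the nonlocal part of $(\mathcal{L}_a)^{*}\rho_{ss}$ equals $-\int j^{nl}_{ss}(x,y)\,dy$. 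Adding the two, $(\mathcal{L}_a)^{*}\rho_{ss}$ coincides with $\mathcal{L}^{*}\rho_{ss}$ and hence vanishes by \eqref{SnFKe}. An equivalent coordinate-free route: $\mathcal{L}_a$ is antisymmetric on $L^{2}(\rho_{ss})$ and annihilates constants, so for every test function $\psi$ one has $\int \psi\,(\mathcal{L}_a)^{*}\rho_{ss}\,dx=\langle \mathbf{1},\mathcal{L}_a\psi\rangle=-\langle\mathcal{L}_a\mathbf{1},\psi\rangle=0$.

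It then remains to treat \eqref{anti-va} and \eqref{ka-0}. The middle equality in \eqref{anti-va}, read as $\nabla\cdot v_a+v_a\cdot\nabla\log\rho_{ss}$, is just the product rule for $\nabla\cdot(v_a\rho_{ss})$, so \eqref{anti-va} amounts to $\nabla\cdot j^{loc}_{ss}=0$; and directly from the definition of $k_a$ one gets $\int_{\mathbb{R}^{n}\setminus\{x\}}k_a(x,y)\,dy=\tfrac{1}{2\rho_{ss}(x)}\int_{\mathbb{R}^{n}\setminus\{x\}} j^{nl}_{ss}(x,y)\,dy$, so \eqref{ka-0} amounts to $\int j^{nl}_{ss}(x,y)\,dy=0$ for every $x$. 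This is the delicate point, and I expect it to be the main obstacle: the stationary equation \eqref{SnFKe} only forces the \emph{sum} of these two quantities to vanish, so to get the separate balances one must use that the diffusion sector and the pure-jump sector of the dynamics each leave $\rho_{ss}$ invariant --- equivalently, that the first-order part $v_a\cdot\nabla$ and the nonlocal part of $\mathcal{L}_a$ in \eqref{op:La} are \emph{individually} antisymmetric in $L^{2}(\rho_{ss})$, which is precisely what \eqref{anti-va}--\eqref{ka-0} assert. The hard part is therefore establishing this sector-wise stationarity (a natural hypothesis in the stochastic thermodynamics of jump diffusions, where the Gaussian and Poisson noises couple to independent reservoirs); once it is in hand, both displayed identities follow immediately from the reductions above.
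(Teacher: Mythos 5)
Your treatment of the first assertion is correct and is essentially the paper's own argument in disguise: writing $j^{loc}_{ss}(x):=b(x)\rho_{ss}(x)-\beta^{-1}A(x)\nabla\rho_{ss}(x)=v_a(x)\rho_{ss}(x)$ and observing that $\rho_{ss}(x)k_a(x,y)-\rho_{ss}(y)k_a(y,x)=\rho_{ss}(x)k(x,y)-\rho_{ss}(y)k(y,x)$ reduces $(\mathcal{L}_a)^{*}\rho_{ss}$ to $\mathcal{L}^{*}\rho_{ss}$, which vanishes by \eqref{SnFKe}; your duality argument via $\langle \mathbf{1},\mathcal{L}_a\psi\rangle_{ss}=-\langle\mathcal{L}_a\mathbf{1},\psi\rangle_{ss}=0$ is an acceptable alternative. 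More importantly, the obstacle you flag for \eqref{anti-va} and \eqref{ka-0} is real. The stationary equation only yields
\begin{equation*}
\nabla\cdot j^{loc}_{ss}(x)+\int_{\mathbb{R}^{n}\setminus\{x\}}\bigl[\rho_{ss}(x)k(x,y)-\rho_{ss}(y)k(y,x)\bigr]\,dy=0,
\end{equation*}
whereas \eqref{anti-va} and \eqref{ka-0} assert that the two summands vanish \emph{separately}. The paper's proof does not bridge this gap: it opens with ``Note that $\nabla\cdot j^{loc}_{ss}(x)=0$'' and later invokes ``$\rho_{ss}$ satisfies $\int[k(x,y)\rho_{ss}(x)-k(y,x)\rho_{ss}(y)]\,dy=0$'' as though each were the stationary Fokker--Planck equation, when only their sum is. So your diagnosis is accurate, and the additional hypothesis you identify (that the diffusion sector and the jump sector each leave $\rho_{ss}$ invariant) is genuinely needed and is nowhere assumed in the paper.

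In fact the sector-wise identities fail for the paper's own Example 5.1: there $j^{loc}_{ss}(x)=-x\rho_{ss}(x)-\tfrac12\rho_{ss}'(x)$ with $\hat\rho_{ss}(\xi)=\exp(-\xi^{2}/4-|\xi|^{\alpha}/\alpha)$, and a direct Fourier computation gives $\widehat{\partial_x j^{loc}_{ss}}(\xi)=-|\xi|^{\alpha}\hat\rho_{ss}(\xi)\neq0$, exactly cancelled by the nonlocal term. Hence the lemma as stated is false without the extra hypothesis, and your proposal is incomplete precisely where the paper's proof is unjustified. To repair the statement one must either assume sector-wise stationarity explicitly, or weaken \eqref{anti-va}--\eqref{ka-0} to the single combined identity above and then re-examine the proof of Proposition \ref{prop:La}, which uses both pieces separately (for $I_1=0$ and for the symmetrization of $I_2$).
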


\begin{proof}
Note that $\nabla j^\mathrm{loc}_\mathrm{ss}(x) = \nabla [(b(x) - \beta^{-1}A(x) \nabla \log \rho_\mathrm{ss}(x))\rho_\mathrm{ss}(x)] = 0$, and
\begin{equation*}
    \beta^{-1}A(x)\nabla\log\rho_\mathrm{ss}(x)\cdot\rho_\mathrm{ss}(x) - \beta^{-1}A(x)\nabla\rho_\mathrm{ss}(x) = 0.
\end{equation*}
It follows that
\begin{equation*}
    \nabla[v_\mathrm{a}(x)\rho_\mathrm{ss}(x)] = \nabla[b(x)\rho_\mathrm{ss}(x)] - \beta^{-1} \nabla[A(x)\nabla\rho_\mathrm{ss}(x)] = 0.
\end{equation*}
For the nonlocal part, by the definition of $k_\mathrm{a}(x,y)$, we have
\begin{equation*}
    \rho_\mathrm{ss}(x)k_\mathrm{a}(x,y) - \rho_\mathrm{ss}(y)k_\mathrm{a}(y,x) = \rho_\mathrm{ss}(x)k(x,y) - \rho_\mathrm{ss}(y)k(y,x), \quad \forall x\neq y
\end{equation*}
Then we have $(\mathcal{L}_{a})^*\rho_\mathrm{ss}(x) = 0$. 

Since $\rho_\mathrm{ss}$ satisfies
\begin{equation*}
    \int_{\mathbb{R}^n \setminus \{x\}} k(x,y)\rho_\mathrm{ss}(x) - k(y,x)\rho_\mathrm{ss}(y) dy = 0,
\end{equation*}
combining with the definition of $k_\mathrm{a}(x,y)$ yields
\begin{align}\label{anti-nonlocal}
  \int_{\mathbb{R}^n \setminus \{x\}}\rho_\mathrm{ss}(x) k_\mathrm{a}(x,y) dy = 0.
\end{align}
Then we have
\begin{equation}
    \int_{\mathbb{R}^n \setminus \{x\}}\rho_\mathrm{ss}(y) k_\mathrm{a}(y,x) dy = \int_{\mathbb{R}^n \setminus \{x\}} \rho_\mathrm{ss}(x)k_\mathrm{a}(x,y)dy = \rho_\mathrm{ss}(x)\int_{\mathbb{R}^n \setminus \{x\}} k_\mathrm{a}(x,y)dy = 0.
\end{equation}
Since the density function is positive, we obtain \eqref{ka-0}.
\end{proof}

\begin{prop}\label{prop:La}
The free energy dissipation associated with the anti-symmetric part is zero, i.e.
\begin{equation}\label{eq:Fadiss}
    F'_\mathrm{a}(\rho(t)) = \beta^{-1}\int_{\mathbb{R}^n} \mathcal{L}_\mathrm{a}^\ast \rho(t,x) \left(\log \left( \frac{\rho(t,x)}{\rho_\mathrm{ss}(x)} \right) +1 \right) dx = 0.
\end{equation}
\end{prop}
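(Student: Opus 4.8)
The plan is to reproduce, for the anti-symmetric generator $\mathcal{L}_a$ alone, the computation that led to the dissipation identity \eqref{FEdiss}. Writing $h(t,x):=\rho(t,x)/\rho_{ss}(x)$ and inserting the explicit form of $(\mathcal{L}_a)^\ast$ into \eqref{eq:Fadiss}, I would integrate the drift term by parts once and symmetrise the jump double integral in $(x,y)$ exactly as in Section~\ref{Sec3}. The additive constant $+1$ disappears because $\mathcal{L}_a 1 = 0$, hence $\int_{\mathbb{R}^n}(\mathcal{L}_a)^\ast\rho(t,x)\,dx = 0$; what remains is a sum of a local term built from $j^{loc}_a(t,x)=-v_a(x)\rho(t,x)$ paired against $\nabla\log\rho_{ss}-\nabla\log\rho$, and a nonlocal term built from $j^{nl}_a(t,x,y)$ paired against $\log\big(\rho(t,x)\rho_{ss}(y)/(\rho(t,y)\rho_{ss}(x))\big)$, in complete analogy with \eqref{FEdiss}. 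It then suffices to show that each of these two terms vanishes.

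For the local term I would use $\rho\,\nabla\log\rho=\nabla\rho$ and integrate by parts to move the derivative off $\rho$; every resulting term collects into a multiple of $\int_{\mathbb{R}^n}\rho_{ss}^{-1}\,\nabla\cdot(v_a\rho_{ss})\,\rho(t,x)\,dx$, which is identically $0$ by the divergence-free identity \eqref{anti-va} of the preceding Lemma. This part is routine and mirrors the corresponding computation for a pure diffusion.

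For the nonlocal term I would use two structural facts: $\int_{\mathbb{R}^n\setminus\{x\}}k_a(x,y)\,dy=0$ for every $x$ (equation \eqref{ka-0}), and the antisymmetry $\rho_{ss}(x)k_a(x,y)=-\rho_{ss}(y)k_a(y,x)$, which follows at once from $k_a(x,y)=\tfrac{1}{2\rho_{ss}(x)}\big(\rho_{ss}(x)k(x,y)-\rho_{ss}(y)k(y,x)\big)$. The antisymmetry lets me rewrite $j^{nl}_a(t,x,y)=\rho_{ss}(x)k_a(x,y)\,\big(h(t,x)+h(t,y)\big)$, while the logarithm equals $\log h(t,x)-\log h(t,y)$. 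Expanding the product of these two factors and relabelling $x\leftrightarrow y$, the ``diagonal'' pieces $\rho_{ss}(x)k_a(x,y)\,h(t,x)\log h(t,x)$ are removed by \eqref{ka-0}, and the ``cross'' pieces are arranged to cancel each other using the antisymmetry of $\rho_{ss}(x)k_a(x,y)$ together with $\log(h(x)/h(y))=-\log(h(y)/h(x))$.

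I expect this last cancellation to be the main obstacle. The nonlocal term does not vanish piece by piece, and mere operator antisymmetry of $\mathcal{L}_a$ in $L^2(\rho_{ss})$ is not directly enough; one has to juggle the four logarithmic contributions, the normalisation \eqref{ka-0}, the antisymmetry of $\rho_{ss}(x)k_a(x,y)$ and the odd symmetry of the log-ratio simultaneously, and carry out the $(x,y)$-symmetrisation bookkeeping with care. Once the nonlocal term is shown to vanish, combining it with the local computation yields $F'_a(\rho(t))=0$, which is the assertion.
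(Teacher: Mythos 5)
Your treatment of the constant and of the local (drift) contribution is correct and coincides with the paper's: $\mathcal{L}_a 1=0$ removes the $+1$, and the drift term reduces after one integration by parts to a multiple of $\int_{\mathbb{R}^n} \rho_{ss}^{-1}\nabla\cdot(v_a\rho_{ss})\,\rho(t,x)\,dx$, which vanishes by \eqref{anti-va}. The structural facts you isolate for the jump part --- the antisymmetry $\rho_{ss}(x)k_a(x,y)=-\rho_{ss}(y)k_a(y,x)$, the normalisation \eqref{ka-0}, and the rewriting $j^{nl}_{a}(t,x,y)=\rho_{ss}(x)k_a(x,y)\big(h(t,x)+h(t,y)\big)$ with $h=\rho/\rho_{ss}$ --- are exactly the ones the paper uses, and the diagonal pieces $h(x)\log h(x)$ and $h(y)\log h(y)$ do indeed drop out by \eqref{ka-0}.

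The gap is precisely at the step you flag as the main obstacle: the cancellation of the cross pieces does not go through. Write $q(x,y):=\rho_{ss}(x)k_a(x,y)$. After the diagonal pieces are removed, what survives is (up to a positive constant) $\int\!\!\int q(x,y)\,\big[h(y)\log h(x)-h(x)\log h(y)\big]\,dy\,dx$. Both $q(x,y)$ and the bracket are antisymmetric under $x\leftrightarrow y$, so their product is \emph{symmetric}; the relabelling $x\leftrightarrow y$ therefore reproduces the same integral with the same sign and yields no cancellation. Nor does the quantity vanish for structural reasons: in the discrete analogue of a three-state cycle with uniform $\rho_{ss}$ and $q_{12}=q_{23}=q_{31}=-q_{21}=-q_{32}=-q_{13}=c$, the identities $\sum_j q_{ij}=0$ and $q_{ij}=-q_{ji}$ hold, yet $\sum_{i,j}q_{ij}\big[h_j\log h_i-h_i\log h_j\big]=2c\big[(h_2-h_3)\log h_1+(h_3-h_1)\log h_2+(h_1-h_2)\log h_3\big]$, which equals $2c(e-1)^2\neq 0$ for $h=(1,e,e^2)$. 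So antisymmetry, \eqref{ka-0} and the odd symmetry of the log-ratio are not sufficient to close this step, and the bookkeeping you propose cannot be completed. For what it is worth, the paper's own proof of Proposition \ref{prop:La} meets the same obstruction: its second evaluation of $I_2$ replaces $\rho(t,y)k_a(y,x)$ by $\rho_{ss}(y)k_a(y,x)\,\rho(t,x)/\rho_{ss}(x)$, a substitution valid only when $h(x)=h(y)$, so the difficulty you identified is genuine and is not resolved there either; the vanishing of $F'_a$ holds for the local transport part but requires an additional argument (or an additional hypothesis on $k_a$) for the nonlocal part.
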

\begin{proof}
The free energy dissipation is 
\begin{align*}
   F'_\mathrm{a}(\rho(t,x))  = & \frac{1}{\beta}\int_{\mathbb{R}^n}[\rho(t,x) v_\mathrm{a}(x)]^\mathrm{T} [\nabla \log\rho_\mathrm{ss}(x) - \nabla \log\rho(t,x)] dx \\
    & +\frac{1}{\beta}\int_{\mathbb{R}^n}\int_{\mathbb{R}^n \setminus \{x\}} [\rho(t,y)k_\mathrm{a}(y,x) - \rho(t,x)k_\mathrm{a}(x,y)]\log \left( \frac{\rho(t,x) }{\rho_\mathrm{ss}(x)} \right) dydx \\
    = & I_1 + I_2. 
\end{align*}
For $I_1$, by \eqref{anti-va}, we have
\begin{align*}
    I_1 
    = & \frac{1}{\beta}\int_{\mathbb{R}^n}[\rho(t,x) v_\mathrm{a}(x)]^\mathrm{T} (\nabla \log\rho_\mathrm{ss}(x) - \nabla \log\rho(t,x)) dx\\
   =& \frac{1}{\beta}\int_{\mathbb{R}^n} \Big(\nabla\cdot v_\mathrm{a}(x) + v_\mathrm{a}(x) \nabla\log \rho_\mathrm{ss}(x)\Big) \rho(t,x) dx\\
    = &  0.
\end{align*}
For $I_2$, we have
\begin{align*}
     I_2= & \frac{1}{2\beta}\int_{\mathbb{R}^n}\int_{\mathbb{R}^n \setminus \{x\}} \rho(t,x)k_\mathrm{a}(x,y) \left[  \log \left( \frac{\rho(t,y) }{\rho_\mathrm{ss}(y)} \right) - \log \left( \frac{\rho(t,x) }{\rho_\mathrm{ss}(x)} \right) \right] dydx \\
     =& \frac{1}{2\beta}\int_{\mathbb{R}^n}\int_{\mathbb{R}^n \setminus \{x\}} \rho_\mathrm{ss}(x)k_\mathrm{a}(x,y) \frac{\rho(t,x)}{\rho_\mathrm{ss}(x)} \left[  \log \left( \frac{\rho(t,y) }{\rho_\mathrm{ss}(y)} \right) - \log \left( \frac{\rho(t,x) }{\rho_\mathrm{ss}(x)} \right) \right] dydx \nonumber\\
     =& \frac{1}{4\beta}\int_{\mathbb{R}^n}\int_{\mathbb{R}^n \setminus \{x\}} \rho_\mathrm{ss}(x)k_\mathrm{a}(x,y) \Bigg(\frac{\rho(t,x)}{\rho_\mathrm{ss}(x)}   \left[  \log \left( \frac{\rho(t,y) }{\rho_\mathrm{ss}(y)} \right) - \log \left( \frac{\rho(t,x) }{\rho_\mathrm{ss}(x)} \right) \right] \\
     &- \frac{\rho(t,y)}{\rho_\mathrm{ss}(y)}   \left[ \log \left( \frac{\rho(t,x) }{\rho_\mathrm{ss}(x)} \right) - \log \left( \frac{\rho(t,y) }{\rho_\mathrm{ss}(y)} \right) \right] \Bigg) dydx \nonumber\\
     =& \frac{1}{4\beta}\int_{\mathbb{R}^n}\int_{\mathbb{R}^n \setminus \{x\}} \rho_\mathrm{ss}(x)k_\mathrm{a}(x,y) \left[ \frac{\rho(t,x)}{\rho_\mathrm{ss}(x)} + \frac{\rho(t,y)}{\rho_\mathrm{ss}(y)} \right]  \left[  \log \left( \frac{\rho(t,y) }{\rho_\mathrm{ss}(y)} \right) - \log \left( \frac{\rho(t,x) }{\rho_\mathrm{ss}(x)} \right) \right] dydx.
\end{align*}
By \eqref{ka-0}, we also have
\begin{align}
    I_2 =& \frac{1}{\beta}\int_{\mathbb{R}^n}\int_{\mathbb{R}^n \setminus \{x\}} \rho(t,y)k_\mathrm{a}(y,x) \log \left( \frac{\rho(t,x) }{\rho_\mathrm{ss}(x)} \right) dydx \nonumber\\
    = &  \frac{1}{2\beta}\int_{\mathbb{R}^n}\int_{\mathbb{R}^n \setminus \{x\}} \rho_\mathrm{ss}(y)k_\mathrm{a}(y,x) \frac{\rho(t,x)}{\rho_\mathrm{ss}(x)} \left[ \log \left( \frac{\rho(t,x)}{\rho_\mathrm{ss}(x) } \right) - \log \left( \frac{\rho(t,y)}{\rho_\mathrm{ss}(y)} \right) \right] dydx \nonumber \\
    = &   \frac{1}{4\beta}\int_{\mathbb{R}^n}\int_{\mathbb{R}^n \setminus \{x\}} \rho_\mathrm{ss}(x)k_\mathrm{a}(x,y) \left[ \frac{\rho(t,y)}{\rho_\mathrm{ss}(y)} + \frac{\rho(t,x)}{\rho_\mathrm{ss}(x)}  \right]  \left[ \log \left( \frac{\rho(t,x)}{\rho_\mathrm{ss}(x) } \right) - \log \left( \frac{\rho(t,y)}{\rho_\mathrm{ss}(y) } \right) \right] dydx.
\end{align}
Combining with above two equalities, we have $I_2 = -I_2$, i.e. $I_2 = 0$. Thus we finally obtain \eqref{eq:Fadiss}.
\end{proof}

The vanishing of the free energy dissipation, $F'_\mathrm{a}(\rho) = 0$, highlights the physical role of the anti-symmetric generator $\mathcal{L}_\mathrm{a}$. 
The operator $\mathcal{L}_\mathrm{a}$ drives the system in a ceaseless motion that preserves the stationary measure $\rho_\mathrm{ss}$ dynamically, 
so that the free energy remains constant due to the associated canonical conservative flow.
This property analogous to the inertial motion in Hamiltonian mechanics, thereby creating circulation without dissipation.

\begin{remark}
With the non-equilibrium steady state $\rho_\mathrm{ss}$, stationarity is maintained not by the detailed balance, but by the condition that the net flux along any closed cycle sums to zero. 
The vector field $v_\mathrm{a}(x)$ drives the probability mass transport along the level sets of the stationary density $\rho_\mathrm{ss}$, 
The jump kernel $k_\mathrm{a}(x,y)$ explicitly constructs these non-vanishing, divergence-free probability circulations (loops like $x \to y \to z \to x$).
Thus the canonical conservative flow creates a circulation flow that contributes to the breaking of detailed balance without altering the stationary density $\rho_\mathrm{ss}$ itself.
\end{remark}

\subsection{Symmetric reversible diffusion}

The symmetric operator $\mathcal{L}_\mathrm{s}$ can be served as an infinitesimal generator for a jump-diffusion process $X^\mathrm{sym}_t$, which is governed by the SDE
\begin{equation}
    dX^\mathrm{sym}_t = \Big[ \beta^{-1}A(X^\mathrm{sym}_t)\nabla\log\rho_\mathrm{ss}(X^\mathrm{sym}_t) + \beta^{-1} \nabla\cdot A(X^\mathrm{sym}_t) \Big] dt + \sqrt{2\beta^{-1}}a(X^\mathrm{sym}_t)dB_t + dJ_t.
\end{equation}
where the jump part $dJ_t$ is a Poisson random measure with the jump kernel
\begin{equation*}
    k_\mathrm{s}(x,y) = \frac{1}{2}\left( \frac{k(x,y)}{\rho_\mathrm{ss}(y)} + \frac{k(y,x)}{\rho_\mathrm{ss}(x)} \right)\rho_\mathrm{ss}(y).
\end{equation*}
Since the generator $\mathcal{L}_{s}$ is a self-adjoint operator with respect to the weighted $L^2$ space $L^2(\mathbb{R}^n, \rho_\mathrm{ss}(x)dx)$, 
by \cite[Theorem 5.1]{ZL25}, the jump-diffusion process $X^\mathrm{sym}_t$ is time-reversible with the stationary density $\rho_\mathrm{ss}$, and satisfies the detailed balance condition at the stationary density $\rho_\mathrm{ss}(x)$, i.e.
\begin{align*}
    \beta^{-1}A(x)\nabla\log\rho_\mathrm{ss}(x)\rho_\mathrm{ss}(x) - \beta^{-1}A(x)\nabla\rho_\mathrm{ss}(x) = 0,\\
    k_\mathrm{s}(x,y)\rho_\mathrm{ss}(x) - k_\mathrm{s}(y,x)\rho_\mathrm{ss}(y) = 0.
\end{align*}
Using the symmetric part $\mathcal{L}_\mathrm{s}$, we can define a symmetric reversible jump diffusion by using the following Dirichlet form associated with $\mathcal{L}_\mathrm{s}$.

\begin{definition}
We define the Dirichlet form of the symmetric generator $\mathcal{L}_\mathrm{s}$ as
\[
\mathcal{E}(f,g)
:= - \langle f, \mathcal{L}_\mathrm{s} g\rangle_\mathrm{ss}
= -\int_{\mathbb{R}^n} f(x)( \mathcal{L}_\mathrm{s} g)(x)\rho_\mathrm{ss}(x)\,dx, \quad \forall f,g\in L^2(\rho_\mathrm{ss}).
\]
Equivalently,
\[
\mathcal{E}(f,g)
= \int_{\mathbb{R}^n} \big( \nabla f(x) \big)^\mathrm{T} \nabla g(x) \rho_\mathrm{ss}(x)dx
+ \frac12 \int_{\mathbb{R}^n}  \int_{\mathbb{R}^n \setminus \{x\} }  \big(f(x)-f(y)\big)\big(g(x)-g(y)\big)k_\mathrm{s}(x,y)\rho_\mathrm{ss}(x) dy dx.
\]
\end{definition}

Now we define the Fisher information of the jump diffusions $X_t$ via the Dirichlet form of the symmetric generator $\mathcal{L}_\mathrm{s}$.

\begin{definition}
Let $f(t):= d\mu(t)/d\mu_\mathrm{ss} =\rho/\rho_\mathrm{ss}$ be the relative density. We define the Fisher information of the jump diffusions $X_t$ as
\[
\mathcal{I}[\rho(t)]
:= \mathcal{E}(f(t),\log f(t))
= -\langle f(t), \mathcal{L}_\mathrm{s} \log f(t)\rangle_\mathrm{ss}.
\]
\end{definition}
Note that the Fisher information $\mathcal{I}[\rho]$ induced by $\mathcal{L}_\mathrm{s}$ admits a local--nonlocal decomposition
\[
\mathcal{I}[\rho]
= \mathcal{I}_{\mathrm{loc}}[\rho] + \mathcal{I}_{\mathrm{nl}}[\rho],
\]
where $\mathcal{I}_{\mathrm{loc}}$ is the diffusion part and $\mathcal{I}_{\mathrm{nl}}$ is a nonlocal ``Carr\'e du champ'' built from the symmetric jump kernel. 

For the local part, since
$ \mathcal{L}^\mathrm{loc}_\mathrm{s}\log f
= \rho_\mathrm{ss}^{-1}\nabla\cdot(\rho_\mathrm{ss}A\nabla\log f)$, after integration by parts, we have
\begin{align*}
    \mathcal{I}_{\rm loc}[\rho(t)] = &  -\int_{\mathbb{R}^n} f \nabla\cdot(\rho_\mathrm{ss}A\nabla\log f)\,dx\\
    = & \int_{\mathbb{R}^n} (\nabla f)^\mathrm{T}\rho_\mathrm{ss}A\nabla\log f\,dx \\
 = & \int_{\mathbb{R}^n} f\rho_\mathrm{ss}(\nabla\log f)^\mathrm{T}A\nabla\log f dx \\ 
 = & \int_{\mathbb{R}^n} [(\nabla\log f(x))^\mathrm{T} A(x)\, \nabla\log f(x)]\rho(x)dx.
\end{align*}
For the nonlocal part,
\begin{align*}
\mathcal{I}_{\rm nl}[\rho(t)]
=  & \frac12\int_{\mathbb{R}^n}\int_{\mathbb{R}^n\setminus \{x\}}
k_\mathrm{s}(x,y)\rho_\mathrm{ss}(x)
\big(\log f(x)-\log f(y)\big)
\big(f(x)-f(y)\big) dydx \\
    = & \frac12 \int_{\mathbb{R}^n}\int_{\mathbb{R}^n\setminus \{x\}}
k_\mathrm{s}(x,y)\rho_\mathrm{ss}(x)\,
L\big(f(x),f(y)\big)\,
\big(\log f(x)-\log f(y)\big)^2\,
dydx.
\end{align*}
where
\[
L(u,v):=\frac{u-v}{\log u - \log v}
\]
is the logarithmic mean.  
This is the nonlocal analogue of a ``Carr\'e du champ'' form:
it represents a nonlocal squared gradient of $\log f$ weighted by the logarithmic mean.
The appearance of the logarithmic mean $L(u, v)$ is fundamental here. It serves as the discrete/nonlocal analog of the density weight $\rho(x)$ in the classical Fisher information. The identity $(u-v)(\log u - \log v) = L(u,v)(\log u - \log v)^2$ allows us to interpret the entropy production as a squared nonlocal gradient norm weighted by $L(\rho(x), \rho(y))$, ensuring that $\mathcal{I}_{\mathrm{nl}}$ converges to the standard Fisher information in the diffusion limit.

By Proposition \ref{prop:La}, the free energy dissipation associated with $\mathcal{L}_\mathrm{a}$ is $0$, and all free energy dissipation of the original jump diffusion is carried by $\mathcal{L}_\mathrm{s}$. 
Using the definition of the Fisher information, we have the following result.

\begin{prop}
The free energy dissipation of the jump-diffusion process $X_t$ is determined solely by the symmetric part $\mathcal{L}_\mathrm{s}$ of the generator, i.e.
\begin{equation*}
    \frac{d}{dt} F[\rho(t)] = \frac{d}{dt} F_\mathrm{s}[\rho(t)]
    = \beta^{-1}\langle f_t, L_\mathrm{s}\log f_t \rangle_\mathrm{ss}
    = -\beta^{-1}\mathcal{I}[\rho(t)]
    \le 0.
\end{equation*}
where $F_{s}[\rho(t)]$ is the free energy functional associated with $\mathcal{L}_\mathrm{s}$, and $\mathcal{I}$ is the Fisher information of $X_t$.
\end{prop}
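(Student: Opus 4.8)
The plan is to leverage the generator splitting $\mathcal{L}=\mathcal{L}_s+\mathcal{L}_a$ established above together with the vanishing of the anti-symmetric contribution, and then to recognise the surviving symmetric piece as minus the Dirichlet form of $\mathcal{L}_s$. Since $\mathcal{L}^{*}=(\mathcal{L}_s)^{*}+(\mathcal{L}_a)^{*}$, the free energy dissipation splits as $\frac{d}{dt}F[\rho(t)]=F'_s(\rho(t))+F'_a(\rho(t))$ exactly as in the decomposition displayed after the decomposition theorem, and Proposition~\ref{prop:La} gives $F'_a(\rho(t))=0$. Hence $\frac{d}{dt}F[\rho(t)]=F'_s(\rho(t))=\frac{d}{dt}F_s[\rho(t)]$, so the whole problem reduces to evaluating $F'_s$.

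For that step I would pass from the flat $L^2(dx)$ adjoint back to the weighted inner product $\langle\cdot,\cdot\rangle_{ss}$. Writing $f_t=\rho(t,\cdot)/\rho_{ss}$ and using that $\mathcal{L}_s$ is self-adjoint on $L^2(\rho_{ss})$ — equivalently the conjugation identity $(\mathcal{L}_s)^{*}g=\rho_{ss}\,\mathcal{L}_s(g/\rho_{ss})$ used in the proof of the decomposition theorem — one gets $(\mathcal{L}_s)^{*}\rho(t,\cdot)=\rho_{ss}\,\mathcal{L}_s f_t$, hence
\[
F'_s(\rho(t))=\beta^{-1}\big\langle \mathcal{L}_s f_t,\ \log f_t+1\big\rangle_{ss}.
\]
The constant term drops because $\mathcal{L}_s 1=0$ by \eqref{op:Ls}, so by self-adjointness $\langle \mathcal{L}_s f_t,1\rangle_{ss}=\langle f_t,\mathcal{L}_s 1\rangle_{ss}=0$ (equivalently, $\int_{\mathbb{R}^n}(\mathcal{L}_s)^{*}\rho\,dx=0$ is conservation of mass). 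Applying self-adjointness once more,
\[
F'_s(\rho(t))=\beta^{-1}\langle \mathcal{L}_s f_t,\log f_t\rangle_{ss}=\beta^{-1}\langle f_t,\mathcal{L}_s\log f_t\rangle_{ss}=-\beta^{-1}\mathcal{E}(f_t,\log f_t)=-\beta^{-1}\mathcal{I}[\rho(t)],
\]
where the last two equalities are just the definitions of the Dirichlet form $\mathcal{E}$ and of the Fisher information $\mathcal{I}$; the $\beta^{-1}$ and the matrix $A$ are the only factors left to carry through the routine bookkeeping (the same identity can alternatively be obtained by integrating by parts term by term directly in $F'_s$, producing $-\beta^{-1}\mathcal{I}_{\rm loc}$ and $-\beta^{-1}\mathcal{I}_{\rm nl}$).

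It then remains to check the sign, using the local--nonlocal decomposition $\mathcal{I}[\rho]=\mathcal{I}_{\rm loc}[\rho]+\mathcal{I}_{\rm nl}[\rho]$ already recorded. Here $\mathcal{I}_{\rm loc}[\rho(t)]=\int_{\mathbb{R}^n}(\nabla\log f_t)^{T}A(x)\,\nabla\log f_t\,\rho(t,x)\,dx\ge0$ since $A=aa^{T}$ is positive semidefinite, and $\mathcal{I}_{\rm nl}[\rho(t)]=\frac12\int_{\mathbb{R}^n}\int_{\mathbb{R}^n\setminus\{x\}}K_s(x,y)\,L\!\big(f_t(x),f_t(y)\big)\,\big(\log f_t(x)-\log f_t(y)\big)^{2}\,dy\,dx\ge0$, because $K_s(x,y)=\tfrac12\big(\rho_{ss}(x)k(x,y)+\rho_{ss}(y)k(y,x)\big)\ge0$ and the logarithmic mean satisfies $L(u,v)\ge0$ for $u,v>0$. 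Therefore $\frac{d}{dt}F[\rho(t)]=-\mathcal{I}[\rho(t)]\le0$, which is the claim. I expect the main obstacle to be the middle step: rigorously justifying the conjugation identity $(\mathcal{L}_s)^{*}\rho=\rho_{ss}\,\mathcal{L}_s(\rho/\rho_{ss})$ and the accompanying integration by parts and Fubini exchanges — in particular that no boundary or tail contributions appear and that the ``$+1$'' term genuinely cancels. This relies on the decay and integrability provided by the Lyapunov bound~(D), the regularity hypothesis~(E) and condition~\eqref{ratiolim}, which also guarantee that $\mathcal{I}_{\rm loc}[\rho(t)]$ and $\mathcal{I}_{\rm nl}[\rho(t)]$ are finite, so that the identities above are more than formal; once this is secured the proof is essentially the substitution carried out above.
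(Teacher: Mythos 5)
Your proposal is correct and follows essentially the same route as the paper: the paper's (largely implicit) proof likewise combines the splitting $dF/dt = F'_a + F'_s$, Proposition~\ref{prop:La} to annihilate $F'_a$, and the identification of $F'_s$ with the Dirichlet form $-\mathcal{E}(f_t,\log f_t)=-\mathcal{I}[\rho(t)]$, with nonnegativity of $\mathcal{I}$ read off from the positive semidefiniteness of $A$ and the logarithmic-mean representation of the nonlocal carr\'e du champ. Your bookkeeping of the prefactor (you obtain $-\beta^{-1}\mathcal{I}$ rather than $-\mathcal{I}$) is in fact more careful than the paper's statement, which is inconsistent about the constant $\beta^{-1}$ (and the matrix $A$) between the definition of $\mathcal{E}$, the computed $\mathcal{I}_{\rm loc}$, and the proposition itself.
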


The theoretical significance of the nonlocal Fisher information $\mathcal{I}[\rho]$ lies in its connection to the exponential convergence of the system. specifically, 
if the symmetric generator $\mathcal{L}_\mathrm{s}$ satisfies a Modified Log-Sobolev Inequality \cite{W00,C04}:
\begin{equation}\label{eq:MLSI}
    F(\rho(t)) = \beta^{-1}\mathcal{H}(\mu | \mu_\mathrm{ss}) \le \beta^{-1}C_{I} \mathcal{I}(\rho(t)),
\end{equation}
for some constant $C_{I} > 0$, 
then the free energy decays exponentially:
$$F(t) \le F(0) \mathrm{e}^{- \frac{t}{\beta C_{I}}}.$$
While establishing the precise conditions for \eqref{eq:MLSI} for general L\'{e}vy driven SDEs is a complex analytical problem depending on the tail behavior of the L\'{e}vy measure \cite{GI08}, 
the decomposition $\mathcal{L} = \mathcal{L}_\mathrm{s} + \mathcal{L}_\mathrm{a}$ clarifies that the relaxation rate is determined solely by the spectral properties of the symmetric part $\mathcal{L}_\mathrm{s}$.

\section{Application: intracellular particle transports}

In this section, we apply our non-Gaussian stochastic-thermodynamic framework to analyze some intracellular particle transports in biological systems operating far from equilibrium.
With a particular focus on active transport processes that involve external energy input, we provide a practical framework for their quantification and analytical characterization.

A confined Brownian particle is classically described by a Langevin equation with Gaussian noise (e.g.\ $dX_t^c=-\nabla V(X_t^c)\,dt+\sqrt{2}\,dB_t$). 
In crowded and compartmentalized intracellular environments, purely diffusive motion is often insufficient to account for observed intermittent long relocations. 
Cargos can transiently engage cytoskeletal tracks and molecular motors or be carried within vesicles, producing directed ``run'' phases superimposed on background diffusion \cite{vale1985,vale1985b,schroer1989}. 
Because such active processes are fueled by chemical energy, detailed balance is typically broken, and non-Gaussian jump  modeling becomes a natural coarse-grained choice \cite{schliwa1984,stenmark2009}. In particular, important evidence for employing a jump-process model comes from recent observations of intracellular vesicle transport \cite{chen2015}, which show that jump signatures in the data become more distinct as the average number $\langle n\rangle$ of molecular motors clustered around a vesicle increases.

We model the intracellular active transport dynamics by the jump diffusion introduced in Section~\ref{Sec2}:
\begin{equation*}
    dX_t =  b(X_{t})dt +  \sqrt{2\beta^{-1}} a(X_{t^-}) \circ dB_t  +  \int_{\mathbb{R}^n\setminus\{0\}} \sigma(X_{t^{-}},z) N(dt,dz),
\end{equation*}
Here $b$ encodes effective deterministic drift (e.g.\ confinement or mean reversion); the Brownian term represents thermal agitation and unresolved small-scale collisions; and the jump term represents intermittent burst-like relocations such as engagement/disengagement with motor-driven transport, abrupt release events, or switching between caged and running phases. The number or density of the underlying microscopic structures (e.g,\ the average motor count $\langle n\rangle$) constitutes the Poisson intensity rate $\lambda(\langle n\rangle,\dots)$. This rate directly governs the intensity of the non-Gaussian noise \cite{chen2015}.

Active transport generally violates detailed balance, while under structural conditions on $(b,a,\sigma)$ the model may admit an equilibrium steady state (see e.g. \cite{ZL25}). 
Hence the same framework can describe both equilibrium-like and non-equilibrium transport regimes. 
We view a stimulus as changing the effective coefficients and thus changing the target steady state.


For intracellular (jump) active transport model, we focus on: (i) relaxation toward the steady state, and (ii) the maintenance cost once a (possibly non-equilibrium) steady state is reached. 
These are quantified by the macroscopic observables from Section~\ref{Sec3}: free energy $F(t)$ (distance to the target $\mu_{\mathrm{ss}}$), entropy production rate $e_{\mathrm p}(t)$ (time-irreversibility), and housekeeping heat $Q_{\mathrm{hk}}(t)$ (steady non-equilibrium maintenance), linked by
\[
\frac{dF}{dt}=Q_{\mathrm{hk}}-\beta^{-1}e_{\mathrm p}.
\]
Moreover, the generator decomposition $\mathcal L=\mathcal L_{\mathrm s}+\mathcal L_{\mathrm a}$ from Section~\ref{Sec4} separates relaxation and circulation: $\mathcal L_{\mathrm s}$ determines the monotone decay of $F$ via a nonlocal Fisher-information/Dirichlet-form structure, while $\mathcal L_{\mathrm a}$ generates measure-preserving circulation (stationary currents) and thus captures non-equilibrium maintenance without contributing to free-energy decay.

\begin{remark}
    The SDE discussed in this section is limited to one dimension and employs a simplified form of drift. Although modeling can be performed in a high‑dimensional space, actual biological experiments often involve working with low‑dimensional data. For instance, in some experimental settings (e.g., axons or elongated cellular protrusions), transport effectively occurs along quasi-one-dimensional tracks, and microscopic imaging often yields projected two-dimensional trajectories.
    In practical applications, factors such as cell morphology, interactions among different chemical species, and the spatial distribution of experimental data all influence the selection of model parameters. Appropriate models should therefore be constructed based on the actual context, and methods such as Monte Carlo simulation can be employed for numerical investigation.
\end{remark}

\subsection{Intracellular active transport with non-equilibrium steady state}

We present a one-dimensional SDE as a toy model to describe the  particle transport within a single cell:
\begin{equation}\label{EX1SDE}
dX_t = -X_tdt + dB_t + dL^{\alpha}_t, 
\end{equation}
where $B_t$ is a one-dimensional standard Brownian motion and $L^{\alpha}_t$ is a one-dimensional $\alpha$-stable L\'{e}vy process with L\'{e}vy measure $\nu(dz) = C_{1,\alpha}|z|^{-(1+\alpha)}dz$, where $C_{1,\alpha}$ is a normalization constant. The notation of $\alpha$-stable L\'{e}vy process $dL^{\alpha}_t$ can also be seen as compound Poisson process $\int z N(dt,dz)$ with L\'{e}vy measure $\nu(dz) = C_{1,\alpha}|z|^{-(1+\alpha)}dz$. In the following, we demonstrate that \eqref{EX1SDE} provides an appropriate description of the intracellular active transport model. Let the initial distribution $Law(X_0)$ be a normal distribution with mean $3$ and variance $0.1$. Then the solution is given by a linear combination of the Gaussian component and the jump component:
\begin{equation*}
    X_t = X_0\mathrm{e}^{-t} + \int_0^t \mathrm{e}^{-(t-s)}dB_s + \int_0^t \mathrm{e}^{-(t-s)}dL^{\alpha}_s.
\end{equation*}
The above jump diffusion $X_t$ has generator
\begin{align*}
  \mathcal{L}\varphi(x)
  = & -x \nabla\varphi(x) + \frac{1}{2}\Delta \varphi(x) + \int_{\mathbb{R}\setminus\{0\}} [\varphi(x+z) - \varphi(x)]\frac{C_{1,\alpha}}{|z|^{1+\alpha}}dz \\
  = & -x \nabla\varphi(x) + \frac{1}{2}\Delta\varphi(x) + \int_{\mathbb{R} \setminus \{x\}} [\varphi(y) - \varphi(x)] k(y,x)dy \\
  = & -x \nabla\varphi(x) + \frac{1}{2}\Delta\varphi(x) + \Delta^{\alpha/2}\varphi(x),
\end{align*}
where $k(x,y) = C_{1,\alpha}|x-y|^{-(1+\alpha)}$ is the jump kernel associated with the fractional Laplacian operator $\Delta^{\alpha/2}$.

To find the density function, we first calculate the characteristic function (Fourier transform) of the jump part of $X_t$. For a stochastic integral of the form $\int f(s) dL^{\alpha}_s$,  its characteristic function can be calculated using the L\'{e}vy-Khintchine formula. Generally, for a deterministic function $f(s)$, we have:
\begin{equation*}
    \mathbb{E}\left[ \exp\left( i\xi\int_0^t f(s)dL^{\alpha}_s \right) \right] =  \exp\left( -\int_0^t |\xi f(s)|^{\alpha}ds \right)
\end{equation*}
due to the homogeneity of the L\'{e}vy measure and the symmetry of the process. Here $f(s) = \mathrm{e}^{-(t-s)}$, so
\begin{equation*}
    -\int_0^t |\xi f(s)|^{\alpha}ds = -|\xi|^\alpha\int_0^t \mathrm{e}^{-(t-s)}ds =- \frac{1}{\alpha}(1-\mathrm{e}^{-\alpha t})|\xi|^\alpha
\end{equation*}
is a L\'{e}vy symbol of stable L\'{e}vy processes with scale parameter
\begin{equation*}
    \sigma_t^\alpha = \frac{1}{\alpha}(1-\mathrm{e}^{-\alpha t}).
\end{equation*}
Therefore, the characteristic function is as follows
\begin{equation*}
    \Phi(\sigma_t\xi) = \mathbb{E}\left[ \exp\left( i\xi\int_0^t \mathrm{e}^{-(t-s)}dL^{\alpha}_s \right) \right] = \exp(-|\sigma_t\xi|^\alpha ).
\end{equation*}
And the density of the Gaussian part is a scaled Gaussian density with scale parameter $c^2_t=1-\exp(-2t)$
\begin{equation*}
    \rho(t,x) = \frac{1}{\sqrt{\pi}c_t}\exp\left( -\frac{x^2}{c^2_t} \right).
\end{equation*}
Therefore, the characteristic function of the solution $X_t$ is obtained by multiplying the characteristic functions of the two Gaussian distributions (from the initial value and the continuous part) and the characteristic function of the jump component's distribution. Then, applying the inversion formula, we have
\begin{equation*}
    \rho(t,x) =  \frac{1}{\pi}\int_0^\infty \cos((x-3\mathrm{e}^{-t})\xi) \exp \left( -\frac{1}{4}( 1 - \frac{4}{5}\mathrm{e}^{-2t} )\xi^2 -\frac{1}{\alpha}(1-\mathrm{e}^{-\alpha t})\xi^\alpha\right) d\xi,
\end{equation*}
and the invariant measure $\mu_\mathrm{ss}$ with density:
\begin{equation}
    \rho_\mathrm{ss}(x)  = \frac{1}{\pi}\int_0^\infty \cos(x\xi) \exp \left( - \frac{1}{4}\xi^2 -\frac{1}{\alpha}\xi^\alpha\right) d\xi .
\end{equation}

Using the density function above, we plot \Cref{fig:placeholderX}. As indicated by the exponential ergodicity of the equation, the distributions differ markedly in the early evolution stage ($t$ between 0 and 0.5), but become very close to the stationary distribution already at $t=2$.

\begin{figure}
    \centering
    \includegraphics[width=0.75\linewidth]{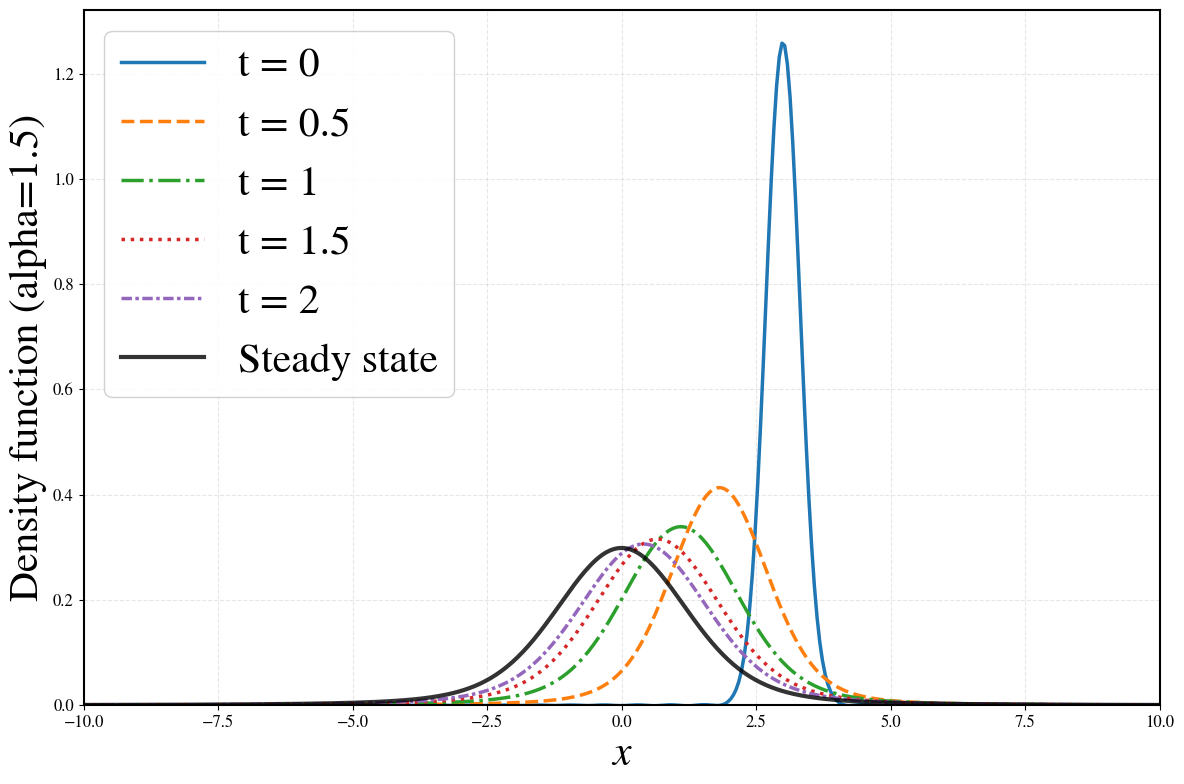}
    \caption{Time Evolution of SDE Density Function from $N(3,0.1)$. The initial condition is a concentrated normal distribution $N(3, 0.1)$ at $t=0$ (solid blue). As time progresses, the density broadens, flattens, and shifts, ultimately converging to its steady-state (black line), which exhibits the characteristic heavy-tailed profile of a Lévy-stable distribution.}
    \label{fig:placeholderX}
\end{figure}

We then plot the free energy \Cref{fig:Free energy} using the density data. This free energy, defined by the relative entropy, captures the energy dissipation driven by the discrepancy between the initial and stationary densities. Consequently, as the distribution converges exponentially to the steady state, the ratio $\rho(t,x)/\rho^\mathrm{ss}(x)$ quickly approaches $1$, and its logarithm (or after a log transform) decays to $0$. Since the free energy decays sharply near time $0$ and changes gently as it approaches equilibrium, the plot of the free energy dissipation rate is very steep at the initial position and quickly tends toward zero (see \Cref{fig:EPR-FED}). This is also consistent with the conclusion in Section \ref{Sec3}.

\begin{figure}[htbp]
    \centering
    \includegraphics[width=0.75\linewidth]{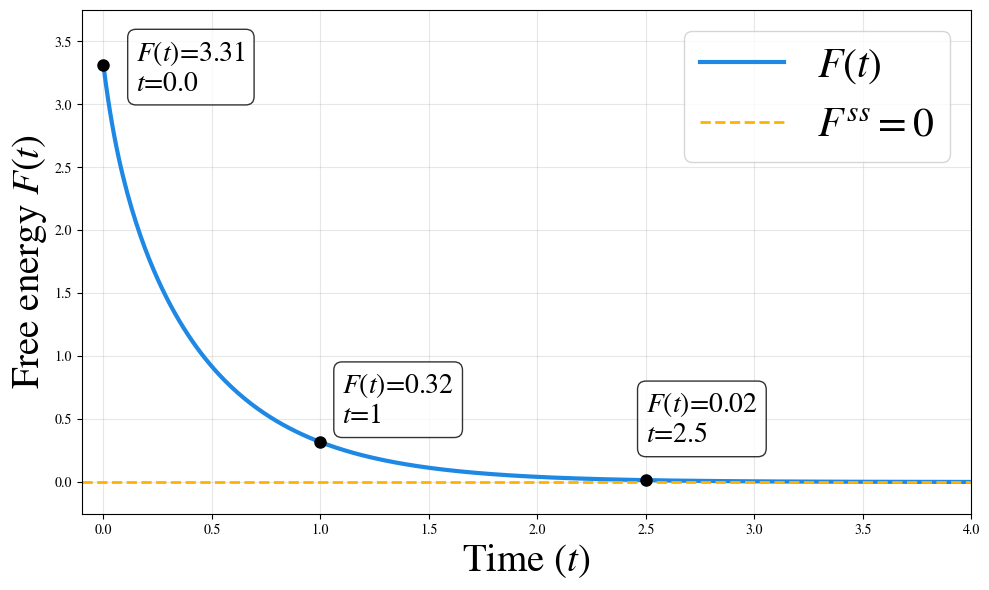}
    \caption{Temporal dissipation of free energy $F(t)$. At the initial time $t=0$, the free energy is approximately $F(0) = 3.31$. It then undergoes rapid dissipation: by $t=1$, more than 90\% of the initial free energy has dissipated, with $F(1) = 0.32$. At $t = 2.5$, about 99\% of the initial free energy has been lost, yielding $F(2.5) = 0.02$. Ultimately, the free energy approaches zero, indicating complete dissipation. The dashed orange line marks the steady-state free energy $F^{\mathrm{}{ss}} = 0$.}
\label{fig:Free energy}
\end{figure}

It is straightforward to verify that if both the density of $X_t$ and its steady-state density exist, then the steady state of $X_t$ fails to satisfy the detailed balance condition at the steady state, i.e., $j^\mathrm{loc}\neq0$ and $j^\mathrm{nl}\neq0$.
In physical intuitive understanding, because the confining drift $b(x)=-x$ concentrates probability mass near the origin, there are many more particles available to make large outward jumps than there are particles in the tails to make equally large inward jumps; the resulting net jump current is balanced by the restoring drift only at the level of stationarity, not by pairwise cancellation of forward/backward transitions. This persistent circulation (nonzero $j^{\mathrm{loc}}$ and $j^{\mathrm{nl}}$) is precisely the signature of broken detailed balance and of a non-equilibrium steady state.

Since the density function of $X_t$ is spatially differentiable, with the temperature simplified to unity, the free energy dissipation is
\begin{align*}
    \frac{dF}{dt}
    =& -\int_{\mathbb{R}} j^\mathrm{loc}(t,x)\nabla\left( \log\left(\frac{\rho_\mathrm{ss}(x)}{\rho(t,x)}\right) \right) dx \\
    &- \frac{1}{2}\int_{\mathbb{R}}\int_{\mathbb{R}^n \setminus \{x\}} j^\mathrm{nl}(t,x,y)\log \left( \frac{\rho(t,x) \rho_\mathrm{ss}(y)}{\rho(t,y)\rho_\mathrm{ss}(x)} \right) dydx\\
    =& \int_{\mathbb{R}} \left( x\rho(t,x) + \frac{1}{2}\nabla\rho(t,x) \right) \nabla\left( \log\left(\frac{\rho_\mathrm{ss}(x)}{\rho(t,x)}\right) \right)dx\\
    &- \int_{\mathbb{R}}\int_{\mathbb{R} \setminus \{x\}} \rho(t,x)k(x,y)\log \left( \frac{\rho(t,x) \rho_\mathrm{ss}(y)}{\rho(t,y)\rho_\mathrm{ss}(x)} \right) dydx.
\end{align*}
It can be decomposed into the housekeeping heat
\begin{align*}
    Q_\mathrm{hk}
    =& \int_{\mathbb{R}}\left( x\rho(t,x) + \frac{1}{2}\nabla\rho(t,x) \right)\left( 2x + \frac{\nabla\rho_\mathrm{ss}(x)}{\rho_\mathrm{ss}(x)}\right)dx\\
    &+ \frac{1}{2}\int_{\mathbb{R}}\int_{\mathbb{R} \setminus \{x\}} [\rho(t,x)k(x,y) - \rho(t,y)k(y,x)]\log \left( \frac{\rho_\mathrm{ss}(x)}{\rho_\mathrm{ss}(y)} \right) dydx
\end{align*}
and entropy production rate
\begin{align*}
    e_\mathrm{p} 
    =& \frac{1}{2}\int_\mathbb{R}\rho(t,x)\left( 2x + \frac{\nabla\rho(t,x)}{\rho(t,x)} \right)^2 dx\\
    &+ \frac{1}{2}\int_{\mathbb{R}}\int_{\mathbb{R} \setminus \{x\}} [\rho(t,x)k(x,y) - \rho(t,y)k(y,x)]\log \left( \frac{\rho(t,x)}{\rho(t,y)} \right) dydx.
\end{align*}
From the expression of housekeeping heat, one can read off the associated local and nonlocal thermodynamic forces as $f^\mathrm{loc} = -2x-(\nabla\rho_\mathrm{ss})/\rho_\mathrm{ss}$ and $f^\mathrm{nl}=\log\rho_\mathrm{ss}(x)-\log\rho_\mathrm{ss}(y)$.
\begin{figure}[htbp]
    \centering
    \begin{subfigure}[b]{0.45\textwidth}
        \centering
        \includegraphics[width=\textwidth]{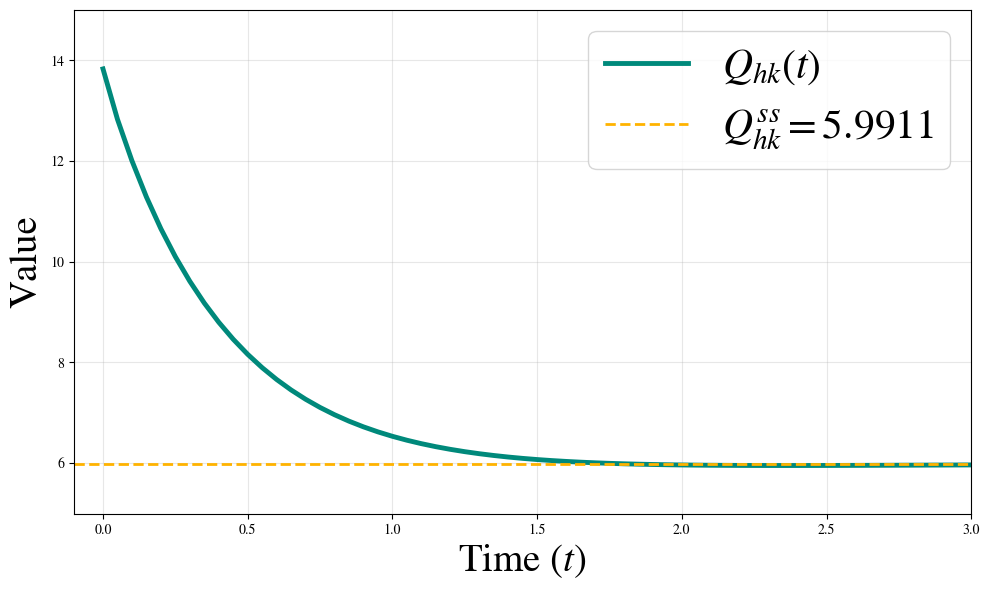}
        \caption{}
        \label{fig:HKH}
    \end{subfigure}
    \hfill
    \begin{subfigure}[b]{0.45\textwidth}
        \centering
        \includegraphics[width=\textwidth]{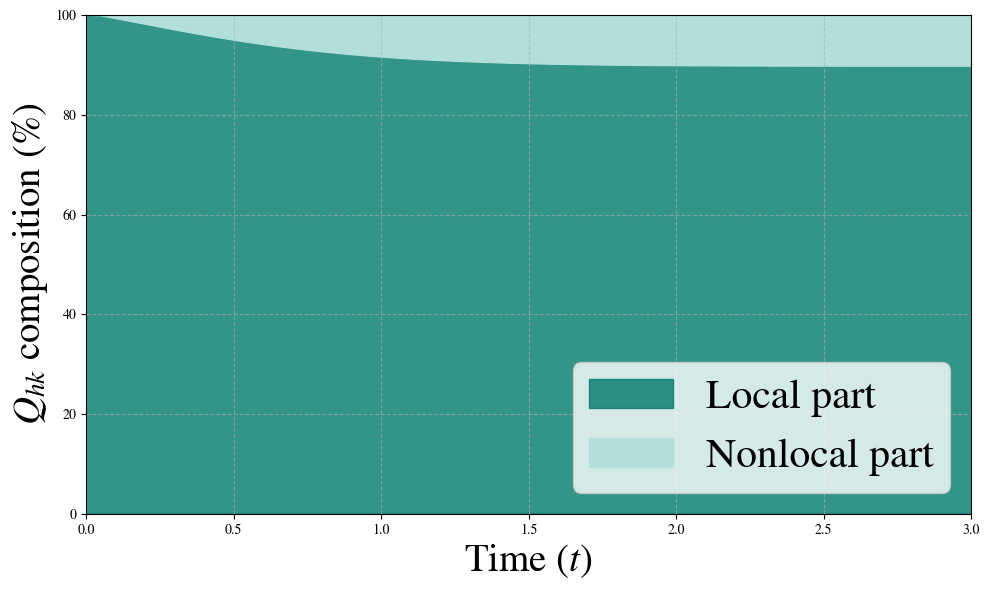}
        \caption{}
        \label{fig:Qdem}
    \end{subfigure}
    
    \vspace{0.5cm}
    \begin{subfigure}[b]{0.45\textwidth}
        \centering
        \includegraphics[width=\textwidth]{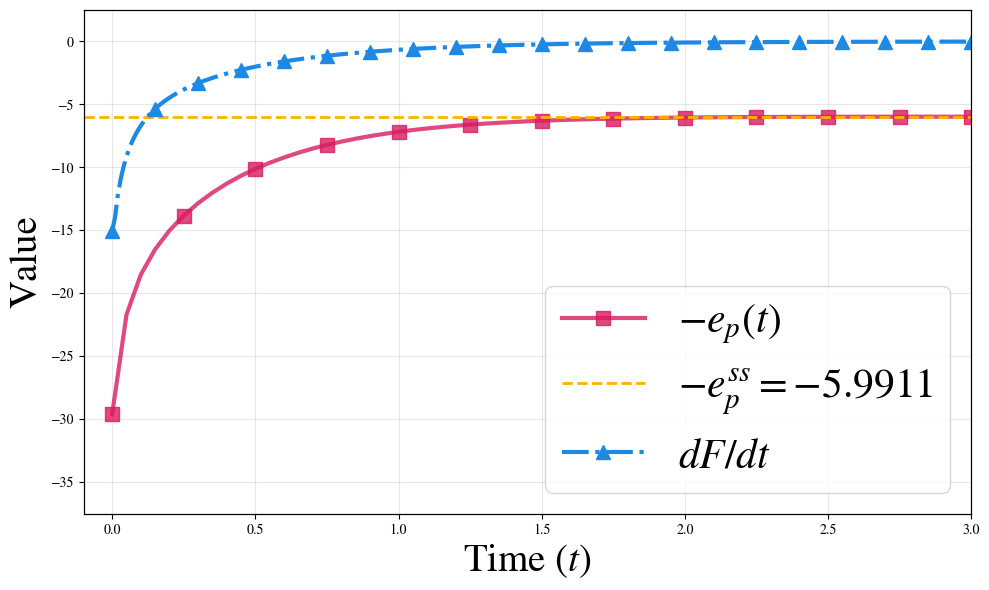}
        \caption{}
        \label{fig:EPR-FED}
    \end{subfigure}
    \hfill
    \begin{subfigure}[b]{0.45\textwidth}
        \centering
        \includegraphics[width=\textwidth]{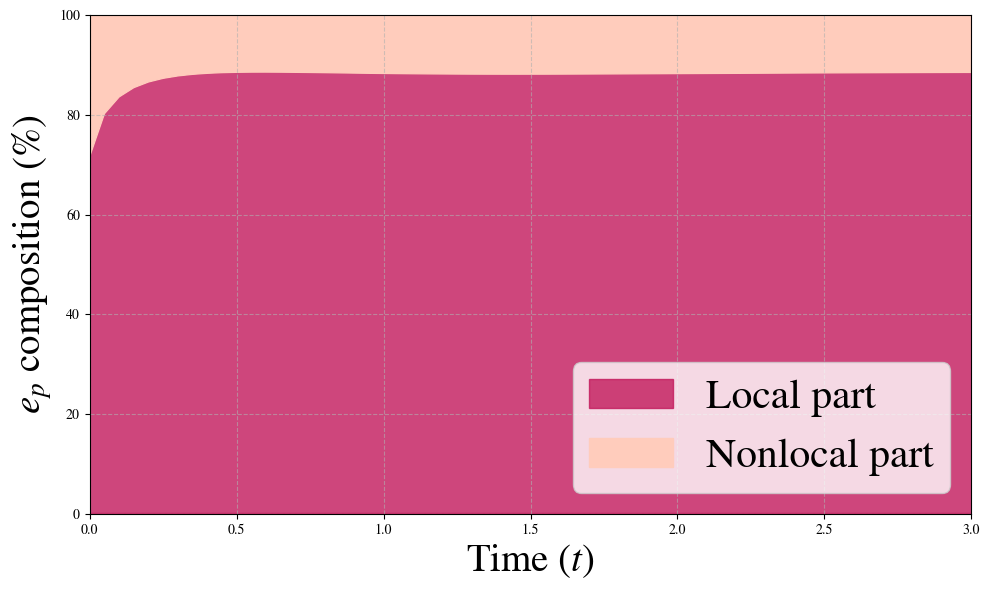}
        \caption{}
        \label{fig:Edem}
    \end{subfigure}
    
    \vspace{0.5cm}
    \begin{subfigure}[b]{0.45\textwidth}
        \centering
        \includegraphics[width=\textwidth]{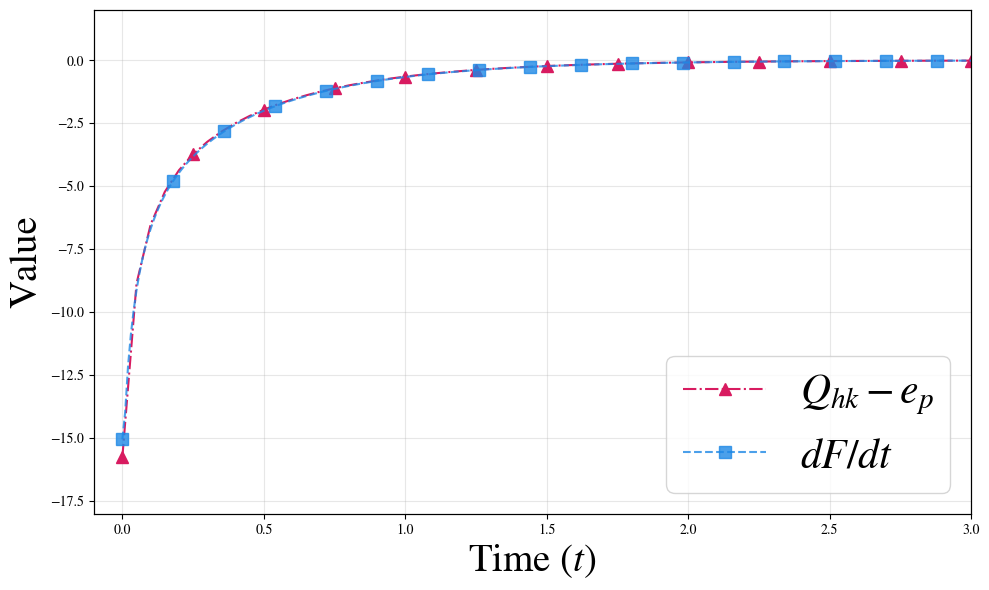}
        \caption{}
        \label{fig:FFF}
    \end{subfigure}
    \hfill
    \begin{subfigure}[b]{0.45\textwidth}
        \centering
        \includegraphics[width=\textwidth]{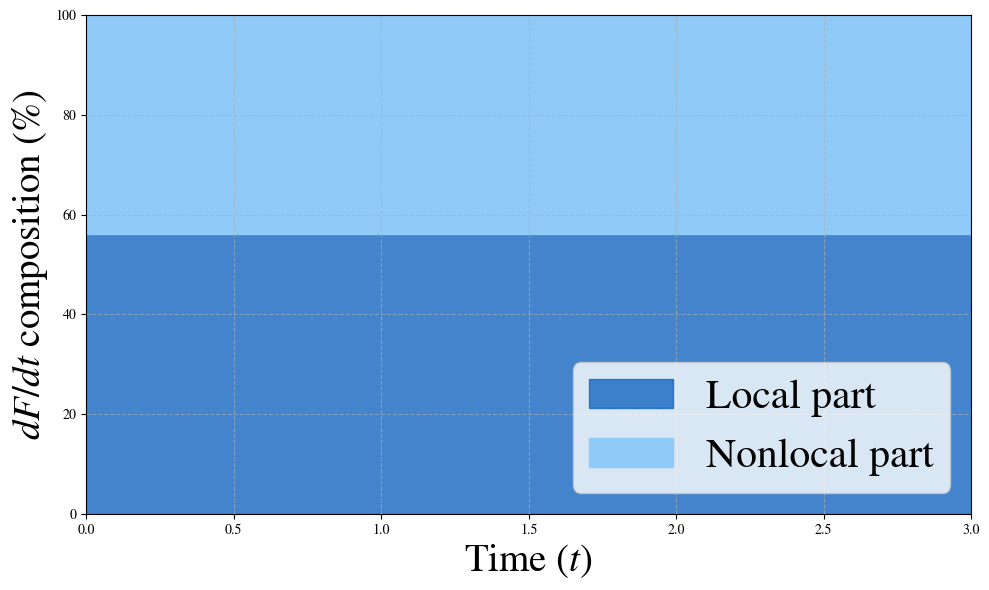}
        \caption{}
        \label{fig:Fdem}
    \end{subfigure}
    \caption{Time evolution of key thermodynamic rates and their local-nonlocal composition. Left column (panels \textbf{a}, \textbf{c} and \textbf{e}) displays the time-dependent behavior, and the right column (panels \textbf{b}, \textbf{d} and \textbf{f}) shows the corresponding percentage composition of local and nonlocal contributions for each quantity. The housekeeping heat $Q_\mathrm{hk}(t)$ (blue solid curve in panel \textbf{a}) and the (negative) entropy production rate $-e_\mathrm{p}(t)$ (magenta squares in panel \textbf{c}) both decay from their initial values and converge toward distinct, non-zero steady states $Q^\mathrm{ss}_\mathrm{hk}=e^\mathrm{ss}_\mathrm{p}(t)=5.9911$ (golden yellow line in panels \textbf{a} and \textbf{c}). In panels \textbf{b} and \textbf{d}, the local part (dark green for $Q_\mathrm{hk}(t)$, deep magenta for $-e_\mathrm{p}(t)$) dominates the respective quantity throughout the relaxation. In contrast, for the free‑energy dissipation rate shown in panel \textbf{f}, the local (blue) and nonlocal (light blue) contributions become comparable in magnitude, reflecting the balanced local/nonlocal composition of $dF/dt$.}
    \label{fig:valueofX}
\end{figure}

In \eqref{EX1SDE}, the drift $b(x)=-\nabla V(x)=-x$ represents an effective mean-reverting confinement (tethering, and crowding-induced restoring tendencies). 
The Gaussian noise arises from the small displacements generated by collisions between the particle and other particles in the cytoplasm—often referred to as diffusion—and also includes directed diffusion such as concentration gradients actively maintained by the cell. 
The $\alpha$-stable term represents intermittent burst-like relocations at the observational scale; it should be viewed as an idealized limit capturing heavy-tailed displacement statistics rather than a literal instantaneous microscopic trajectory.

As shown in \Cref{fig:EPR-FED}, the entropy production rate of the system does not approach zero. This occurs because the external thermodynamic force $f^\mathrm{loc}$ and $f^\mathrm{nl}$ continuously performs work on the system; \Cref{fig:HKH} shows that the power of the external thermodynamic force $dW/dt=Q_\mathrm{hk}$ also does not converge to zero. This precisely corresponds to the external energy supply inherent to active transport processes. A comparison between \Cref{fig:EPR-FED} and \Cref{fig:FFF} confirms $\frac{dF}{dt}=Q_\mathrm{hk}-\beta^{-1}e_\mathrm{p}$: while $e_\mathrm{p}$ and $dF/dt$ fail to coincide in \Cref{fig:EPR-FED}, the introduction of $Q_\mathrm{hk}$ in \Cref{fig:HKH} and \Cref{fig:FFF} reveals the complete decomposition of $dF/dt$ (the plot for $dF/dt$, obtained via a finite-difference scheme, exhibits a small residual error at the origin). This behavior stands in clear contrast to systems subject to no external force (see \eqref{EXSDE2} or \cite{ZL25}), where the free energy dissipation rate would be fully accounted for by the entropy production rate.

Under our parameter settings, \Cref{fig:Qdem} and \Cref{fig:Edem} indicate that the local parts of $e_\mathrm{p}(t)$ and $Q_\mathrm{hk}(t)$ are dominant. In contrast, for the free-energy dissipation rate shown in panel \Cref{fig:Fdem}, the local and nonlocal contributions become comparable in magnitude. Variations in the jump intensity $\lambda$ of the noise (see Section \ref{Sec2}) or in the equation coefficients ($b$, $a$ and $\sigma$) will alter the scales of these components relative to each other. The specific configuration, however, is a data-dependent choice informed by actual experimental results.

We decompose the generator of \eqref{EX1SDE} as $\mathcal{L}=\mathcal{L}_\mathrm{s}+\mathcal{L}_\mathrm{a}$, where
\begin{align*}
    \mathcal{L}_\mathrm{s}\varphi(x)
    &= \frac{1}{2}\nabla\varphi(x)\frac{\nabla\rho_\mathrm{ss}(x)}{\rho_\mathrm{ss}(x)} + \frac{1}{2}\Delta\varphi(x) + \frac{1}{2}\int_{\mathbb{R} \setminus \{x\}}(\varphi(y)-\varphi(x))\left( 1 + \frac{\rho_\mathrm{ss}(y)}{\rho_\mathrm{ss}(x)} \right)k(x,y)dy,\\
    \mathcal{L}_\mathrm{a}\varphi(x)
    &= \left( -x - \frac{1}{2}\frac{\nabla\rho_\mathrm{ss}(x)}{\rho_\mathrm{ss}(x)} \right)\nabla\varphi(x) + \frac{1}{2}\int_{\mathbb{R} \setminus \{x\}}(\varphi(y)-\varphi(x))\left( 1 - \frac{\rho_\mathrm{ss}(y)}{\rho_\mathrm{ss}(x)} \right)k(x,y)dy.
\end{align*}
Thus, an alternative decomposition of the free energy dissipation is achieved through the decomposition of the generator, with
\begin{align*}
    F'_\mathrm{s}(\rho(t,x))
    =& -\frac{1}{2}\int_{\mathbb{R}}\rho(t,x)\left(\frac{\nabla\rho_\mathrm{ss}(x)}{\rho_\mathrm{ss}(x)} - \frac{\nabla\rho(t,x)}{\rho(t.x)}\right)^2dx\\
    &- \frac{1}{2}\int_{\mathbb{R}}\int_{\mathbb{R} \setminus \{x\}} k(x,y)\rho(t,x)\left( 1 + \frac{\rho_\mathrm{ss}(y)}{\rho_\mathrm{ss}(x)} \right)\log\left(\frac{\rho(t,x)\rho_\mathrm{ss}(y)}{\rho(t,y)\rho_\mathrm{ss}(x)}\right) dydx,\\
    F'_\mathrm{a}(\rho(t,x))
    =& 0.
\end{align*}
$\mathcal{L}=\mathcal{L}_\mathrm{s}+\mathcal{L}_\mathrm{a}$ represents the decomposition of the operator relative to the invariant measure $\mu_\mathrm{ss}$ of \eqref{EX1SDE}. The symmetric part is self-adjoint and generates reversible (detailed-balance) dynamics, governing the monotonic decay of free energy; the anti-symmetric part produces a measure-preserving conservative circulation, and therefore its contribution to free energy dissipation is identically zero, i.e., $F'_\mathrm{a}(\rho(t,x)) = 0$.

\Cref{fig:placeholderX} - \Cref{fig:valueofX} visualize the identity $\frac{dF}{dt}=Q_\mathrm{hk}-\beta^{-1}e_\mathrm{p}$ and the splitting $\mathcal{L}=\mathcal{L}_s+\mathcal{L}_a$: relaxation ($F\downarrow$) is governed by the reversible sector $\mathcal{L}_\mathrm{s}$, while persistent non-equilibrium maintenance (plateaus of $Q_\mathrm{hk}$ and $e_\mathrm{p}$) is encoded by the conservative sector $\mathcal{L}_\mathrm{a}$.

\subsection{Intracellular inactive transport with equilibrium steady state}

Now we consider an ``inactivation-like'' scenario in which sustained non-equilibrium driving is suppressed and the system relaxes toward an equilibrium steady state satisfying detailed balance. 
In such a case, the housekeeping heat vanishes at steady state and the entropy production decays to zero as the distribution is equilibrium.
The most fundamental distinction between the steady state of the inactivation-like process (the inactivated-like state) and the quiescent-like state is that the steady state of the quiescent-like state remains non-equilibrium.

We choose an explicitly solvable ``reset-to-Gibbs'' jump mechanism to obtain closed-form densities and directly verify $Q_\mathrm{hk}\equiv 0$ and $dF/dt=-\beta^{-1}e_\mathrm{p}$. 
Consider the jump diffusion
\begin{equation}\label{EXSDE2}
    dY_t = -Y_tdt + \sqrt{2}dB_t + \int_{\mathbb{R} \setminus \{0\}}(z-Y_t) N(dt,dz),
\end{equation}
where $B_t$ is a one-dimensional standard Brownian motion and $N_t$ is the Poisson random measure $dt\nu(dz) = \exp(-\frac{1}{2}z^2)dzdt$. 

The explicit expression for the density function of this process can be intuitively explained by its dynamics: at any time $t$, the distribution of the state $Y_t$ is a mixture distribution. If no jump occurs in the interval $[0, t]$ (with probability $\mathrm{e}^{-\sqrt{2\pi}t}$), then $Y_t$ follows the distribution of an Ornstein-Uhlenbeck process; if at least one jump occurs, then due to the reset property of the jumps, the distribution of $Y_t$ is exactly the stationary distribution (the Gibbs measure). Therefore, let $Y_0=0$, we can use the combination of the jump part and the Gaussian part as its characteristic function, and directly obtain its density function using the inversion formula
\begin{equation*}
    \rho(t,y) = \frac{1}{\pi}\int_0^\infty cos(y\xi)\left[ \mathrm{e}^{-\sqrt{2\pi}t}\exp\left( -\frac{1}{2}(1-\mathrm{e}^{-2t})\xi^2 \right) + (1 - \mathrm{e}^{-\sqrt{2\pi}t})\exp\left( -\frac{1}{2}\xi^2 \right) \right] d\xi.
\end{equation*}
And the invariant measure of the jump process $Y_t$ is a Gibbs distribution $\mu(dy) = \frac{1}{\sqrt{2\pi}}\exp(-\frac{1}{2}y^2)dy$. The evolution plot of the density function \Cref{fig:placeholderY} shows that the distribution is already very close to the steady state at about $t=0.6$, even though it started from a Dirac delta function. As the Dirac delta function introduces a generalized function, the red curve serves only as a schematic illustration.

\begin{figure}
    \centering
    \includegraphics[width=0.75\linewidth]{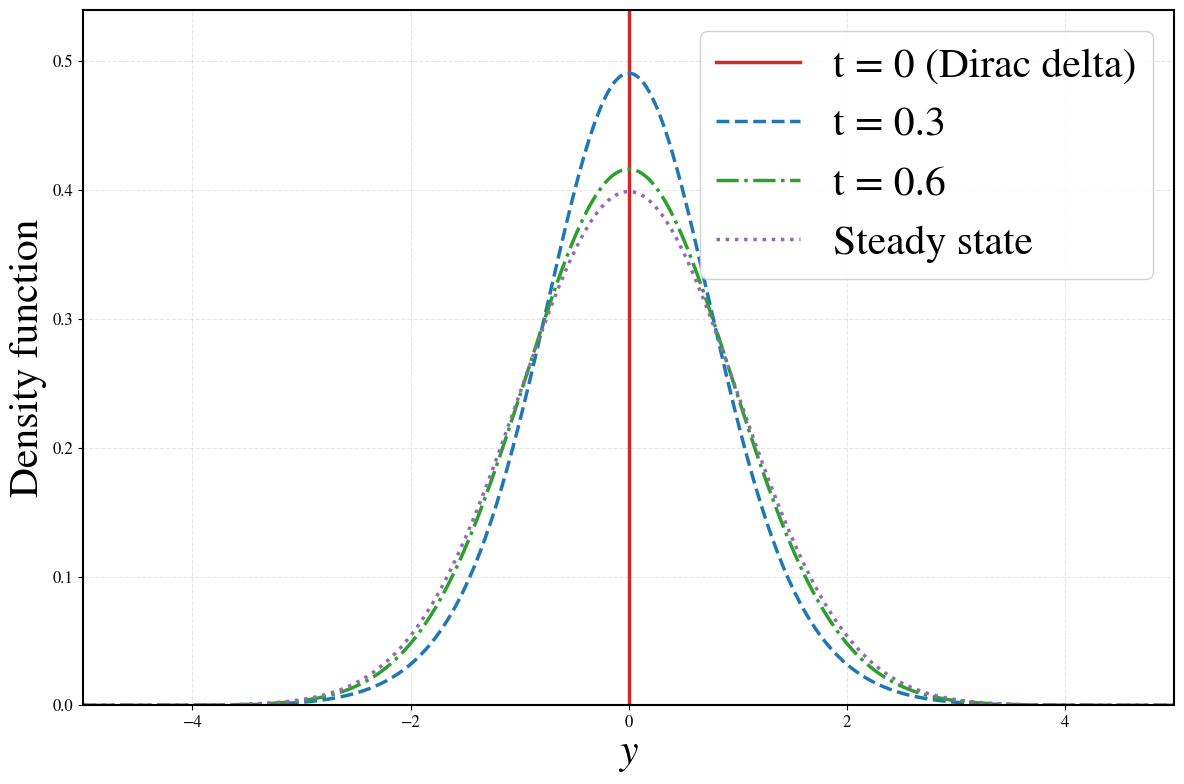}
    \caption{Time Evolution of SDE Density Function from $0$. The density $\rho(y,t)$ is shown at successive times: $t=0$ (red solid line), $t=0.3$ (blue dashed line), and $t=0.6$ (green dash-dotted line), converging toward the steady-state distribution (purple dotted line). The initial Dirac delta function is represented by a sharply localized profile. The solution demonstrates the progressive broadening and smoothing of the density from a concentrated initial state to a diffuse equilibrium distribution.}
    \label{fig:placeholderY}
\end{figure}

The entropy production rate of system $Y_t$ is given by
\begin{equation*}
    e_\mathrm{p} = \int_\mathbb{R}\rho(t,y)\left( x + \frac{\nabla\rho(t,y)}{\rho(t,y)} \right)^2 dy + \int_{\mathbb{R}}\int_{\mathbb{R} \setminus \{y\}} \rho(t,y)\mathrm{e}^{-\frac{1}{2}z^2}\left[ \log \left( \frac{\rho(t,y)}{\rho(t,z)} \right) + \frac{1}{2}(y^2-z^2)\right] dzdy.
\end{equation*}
It can be verified that the steady state of $Y_t$ is an equilibrium state, therefore the free energy dissipation rate of $Y_t$ is entirely given by its entropy production rate \cite{ZL25}.
\begin{figure}
    \centering
    \includegraphics[width=0.75\linewidth]{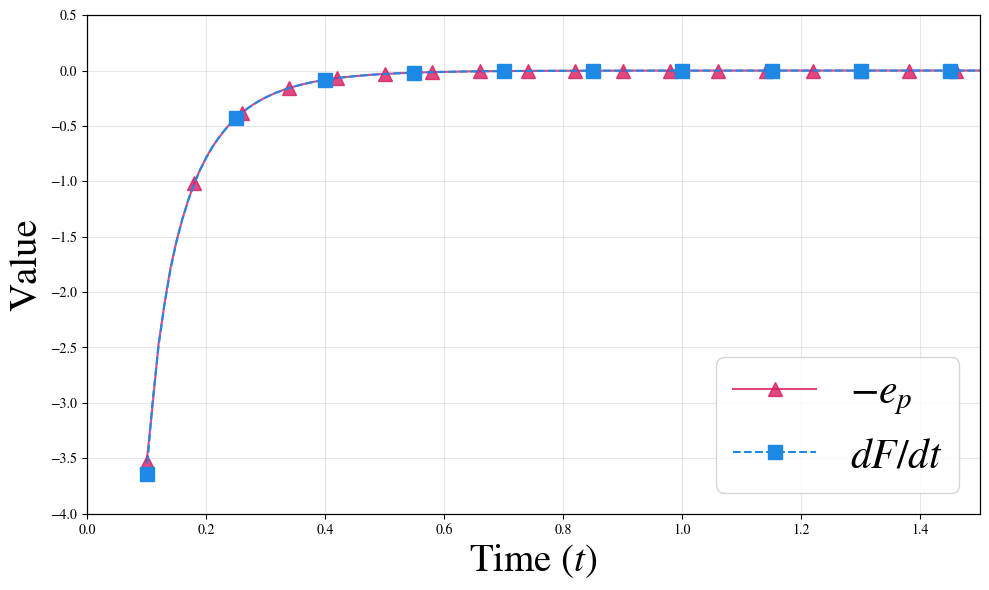}
    \caption{Transient Convergence of Free Energy Dissipation and Entropy Production Rates. The blue squares represent the time derivative of the free energy $dF/dt$, and the red triangles represent the negative entropy production rate, $-e_\mathrm{p}$. Due to the singular nature of the Kullback-Leibler divergence for the initial Dirac delta distribution, calculations commence from $t = 0.1$. The near-perfect overlap of the two curves confirms the theoretical equality $dF/dt = -e_\mathrm{p}$ throughout the transient evolution until both quantities approach zero at the steady state.}
    \label{fig:FFFY}
\end{figure}

\Cref{fig:FFFY} shows the comparison between the negative value of the entropy production rate and the free energy dissipation rate of system $Y_t$, where the two curves are seen to overlap almost perfectly. The comparison between \Cref{fig:valueofX} and \Cref{fig:FFFY} also clearly demonstrates that sustained non-equilibrium driving (e.g., active fluctuations and steady probability currents) is responsible for the persistent positive entropy production at stationarity.

Example \eqref{EXSDE2} provides a minimal prototype of an inactivation-like regime: the jump mechanism acts as a rapid reset toward the Gibbs measure, and the resulting stationary state satisfies detailed balance. Accordingly, the non-equilibrium maintenance cost vanishes, $Q_\mathrm{hk}(t)=0$, and the free-energy dissipation is entirely accounted for by the entropy production, $dF/dt=-\beta^{-1}e_\mathrm{p}(t)$, consistent with \Cref{fig:FFFY}. From the viewpoint of our operator splitting $\mathcal{L}=\mathcal{L}_\mathrm{s}+\mathcal{L}_\mathrm{a}$, this corresponds to a situation where the antisymmetric sector is absent (or negligible), so that no stationary circulation persists once the steady state is reached. In contrast, an activation-like regime can be viewed (at the same coarse-grained level) as one with stronger sustained driving and/or more pronounced jump-mediated transport, for which $\mathcal{L}_\mathrm{a}$ generates non-vanishing stationary currents. In such a regime, $F(t)$ may still relax to zero, yet $Q^\mathrm{ss}_\mathrm{hk}(t) = \beta^{-1}e^\mathrm{ss}_\mathrm{p}(t) > 0$ remains as a persistent energetic cost of maintaining non-equilibrium organization; empirically this often manifests as larger steady dissipation plateaus and sharper transient rates.

\section{Conclusion}

In this work, we have established a structural decomposition for jump--diffusion processes by splitting the generator into symmetric and anti-symmetric components in the weighted space $L^2(\rho_\mathrm{ss})$. 
This decomposition provides a unified thermodynamic interpretation of nonequilibrium dynamics: the symmetric part corresponds to a reversible mechanism responsible for free energy dissipation, while the anti-symmetric part generates a canonical conservative flow sustaining probability currents without contributing to entropy decay.

From the perspective of non-equilibrium dynamics, this framework naturally clarifies the distinction between detailed balance and broken detailed balance. 
When the anti-symmetric component vanishes, the dynamics satisfies detailed balance and relaxes toward equilibrium purely through gradient flow driven by free energy. 
In contrast, a nontrivial anti-symmetric component breaks detailed balance and induces persistent probability currents in the steady state, leading to non-equilibrium stationary states (NESS). 
In this regime, entropy production admits a decomposition into a dissipative contribution associated with the symmetric generator and a housekeeping part that maintains irreversible circulation.

Our results also admit a natural interpretation in terms of the currents--forces structure and the Onsager viewpoint. 
The generalized Fisher information associated with the symmetric generator plays the role of a quadratic dissipation functional, while the corresponding probability currents can be expressed as fluxes driven by thermodynamic forces (gradients of the log-density). 
Near equilibrium, this leads to a linear response regime consistent with Onsager reciprocity relations. 
Beyond equilibrium, the decomposition highlights how non-gradient (rotational) components of the dynamics generate circulating currents that are orthogonal to the thermodynamic forces and therefore do not contribute to free energy dissipation.

A key outcome of this work is the identification of a generalized gradient flow structure for jump--diffusion processes. 
While classical diffusion corresponds to a Wasserstein gradient flow of free energy, the present framework extends this picture to nonlocal dynamics, where the dissipation is governed by a combination of local and jump-induced Fisher information. 
The anti-symmetric component, in contrast, can be viewed as a Hamiltonian-like transport on the space of probability measures, leading to conservative motion along level sets of the free energy.

These structural insights are directly relevant for stochastic dynamics in biological and chemical systems. 
In biochemical reaction networks, molecular motors, and intracellular transport, stochastic dynamics often operate far from equilibrium and exhibit sustained probability currents driven by chemical energy input. 
Such systems can be modeled by jump processes or jump--diffusions, where the anti-symmetric component captures active driving forces (e.g.\ ATP consumption), while the symmetric part reflects thermal fluctuations. 
Similarly, in chemical reaction systems described by master equations, entropy production and housekeeping heat quantify the energetic cost of maintaining non-equilibrium steady states. 
The present decomposition provides a systematic way to separate dissipative relaxation from driven circulation in these contexts.

There are several problems left to be considered in the future:

\medskip
\noindent
\textbf{(i) Nonlocal gradient flow structures.}
While in the purely diffusive setting this structure is rigorously understood as a Wasserstein gradient flow \cite{AGS05}, a systematic and rigorous characterization of the underlying metric structure associated with jump–diffusions, nonlocal gradient flow, and generalized Wasserstein distances remains an open problem.

In particular, it is natural to ask whether the nonlocal Fisher information can be associated with a suitable metric or action functional, extending the Benamou--Brenier formulation of optimal transport to jump processes. 
Recent developments \cite{E2014, PR22} in nonlocal transport metrics and Lévy-driven dynamics indicate that the jump contribution may correspond to a transport mechanism involving discontinuous paths and long-range interactions. 
In our framework, the Dirichlet form associated with $\mathcal L_s$ can define a generalized mobility operator, while the logarithmic derivative $\nabla \log \rho$ (or its nonlocal analogue) plays the role of a thermodynamic force.
Establishing the associated contraction properties, displacement convexity of free energy, and corresponding functional inequalities in such nonlocal settings would provide a deeper geometric understanding of jump-driven systems.

\medskip
\noindent
\textbf{(ii) Entropy production and fluctuation relations.}
The decomposition of the generator naturally leads to a refined structure of entropy production in non-equilibrium systems. 
It would be of interest to relate the present decomposition and total entropy production rate to fluctuation theorems, particularly for jump-driven systems far from equilibrium.
In the spirit of macroscopic fluctuation theory \cite{esposito2010,stochasticThermo1}, one expects that the entropy production rate appears as a quadratic functional governing fluctuations of probability currents around their typical values. 
Extending such large deviation structures to jump--diffusion systems, and identifying the role of the symmetric and anti-symmetric components in the corresponding rate functionals, remains an important open problem.
Understanding how the decomposition interacts with fluctuation theorems may provide a deeper thermodynamic interpretation of irreversible jump processes, particularly in regimes far from equilibrium.

\medskip
\noindent
\textbf{(iii) Potential applications in biological systems and machine learning.}
An important direction is to apply the present decomposition to jump--diffusion dynamics from active biological systems, such as molecular motors, gene regulatory networks, and cellular signaling pathways \cite{briane2020,chen2015}. 
Within our framework, the symmetric component captures passive, thermally induced fluctuations and governs free energy dissipation, while the anti-symmetric component encodes active driving forces that maintain non-equilibrium steady states. 
This separation provides a quantitative way to distinguish dissipative relaxation from active energy consumption, and may offer new tools to study efficiency, robustness, and information processing in living systems.

At the same time, jump--diffusion processes are increasingly relevant in modern machine learning, including high-dimensional stochastic optimization, and generative modeling with non-Gaussian noise \cite{BV24,HLMZ25,HHY2025}. 
The present thermodynamic decomposition suggests a principled way to incorporate non-reversible dynamics to accelerate convergence or enhance sampling efficiency. 
Moreover, the interpretation of free energy and Fisher information may provide new insights into training dynamics, regularization, and the role of noise in high-dimensional learning systems.

Taken together, these observations suggest that the interplay between gradient flow (dissipation) and conservative flow (circulation) may serve as a common organizing principle across biological and artificial systems. 
Understanding how to optimally balance these two mechanisms---minimizing entropy production while maintaining efficient exploration---is an intriguing direction for future research, with potential implications for both the thermodynamics of living systems and the design of scalable learning algorithms.

\medskip
\noindent
Overall, our results highlight that jump--diffusion systems provide a natural setting in which gradient flow, entropy production, and probability currents coexist and interact. 
We expect that further development of this perspective will contribute to a deeper understanding of irreversible processes in complex systems across physics, chemistry, and biology.

\section*{Acknowledgments}
The authors thank Prof. Jingqiao Duan, Dr. Qiao Huang, Dr. Yubin Lu, and Dr. Bin Miao, for their fruitful discussions on thermodynamics and stochastic processes.
QZ acknowledge support from the China Postdoctoral Science Foundation (Grant No. 2023M740331).
SYF acknowledge support from the NSFC grants 12141107 and W2541005, the Guangdong Provincial Key Laboratory of Mathematical and Neural Dynamical Systems (Grant 2024B1212010004), the Cross Disciplinary Research Team on Data Science and Intelligent Medicine (2023KCXTD054), and the Guangdong-Dongguan Joint Research Fund (Grant 2023A151514 0016).

\end{document}